\documentclass{article}

\usepackage{microtype}
\usepackage{graphicx}
\usepackage{subcaption}
\usepackage{booktabs} 

\usepackage{hyperref}

\usepackage[accepted]{icml2026}
\usepackage{algorithm}

\usepackage{amsmath}
\usepackage{amssymb}
\usepackage{mathtools}
\usepackage{amsthm}
\usepackage{enumitem}

\usepackage{tikz}
\usetikzlibrary{shapes.symbols, shapes.geometric, arrows.meta, positioning, calc, fit, spy}

\usepackage{pgfplots}
\pgfplotsset{compat=1.14}
\pgfplotsset{every axis/.append style={grid,axis lines=left}}
\pgfplotsset{every axis plot/.append style={thick,mark size=1.5pt,line join=bevel,mark options={solid}}}
\pgfplotsset{label style={font=\small}}
\pgfplotsset{tick label style={font=\footnotesize}}
\pgfplotsset{xticklabel style={anchor=south west}}
\pgfplotsset{yticklabel style={anchor=south west}}
\pgfplotsset{grid style={color=black!10}}
\pgfplotsset{legend style={draw=none,opacity=0.85,font=\footnotesize,cells={anchor=west,opacity=1}}}
\pgfplotsset{every non boxed x axis/.style={xtick align=inside,shorten <=-.5\pgflinewidth}}
\pgfplotsset{every non boxed y axis/.style={ytick align=inside,shorten <=-.5\pgflinewidth}}
\pgfplotsset{every non boxed z axis/.style={ztick align=inside,shorten <=-.5\pgflinewidth}}
\pgfplotsset{/pgf/number format/1000 sep={\,}}
\pgfplotsset{
    every axis plot/.append style={mark size=2.5pt}
}

\usepackage[capitalize,noabbrev]{cleveref}

\usepackage{xcolor}

\renewcommand{\algorithmicrequire}{\textbf{Input:}}
\renewcommand{\algorithmicensure}{\textbf{Output:}}
\def\RPT{\ensuremath{\mathtt{PermutedPolarTransform}\,}}
\newcommand{\Ind}{\mathbf{1}}
\newcommand{\Var}{\mathrm{Var}}
\newcommand{\Rep}{V}
\newcommand{\rep}{v}
\newcommand{\bhat}{\mathcal{B}}
\theoremstyle{plain}
\newtheorem{theorem}{Theorem}[section]

\newtheorem{lemma}[theorem]{Lemma}

\theoremstyle{definition}
\newtheorem{definition}[theorem]{Definition}

\theoremstyle{remark}

\usepackage[textsize=tiny]{todonotes}
\usepackage{float}

\allowdisplaybreaks[2]
\icmltitlerunning{SoftBinary Coding: A New Information-Theoretic Paradigm for Neural Compression via Fast Channel Simulation}
\begin{document}
\twocolumn[
  \icmltitle{SoftBinary Coding: A New Information-Theoretic Paradigm \\ for Neural Compression via Fast Channel Simulation}
  \icmlsetsymbol{equal}{*}

  \begin{icmlauthorlist}
    \icmlauthor{Ezgi Ozyilkan}{nyu,equal}
    \icmlauthor{Sharang M. Sriramu}{cornell,equal}
    \icmlauthor{Elza Erkip}{nyu}
    \icmlauthor{Aaron B. Wagner}{cornell}
    \icmlauthor{Jona Ballé}{nyu}
  \end{icmlauthorlist}

  \icmlaffiliation{nyu}{Tandon School of Engineering, New York University, Brooklyn, NY}
  \icmlaffiliation{cornell}{Cornell University, Ithaca, NY}
  \icmlcorrespondingauthor{Ezgi Ozyilkan}{ezgi.ozyilkan@nyu.edu}
  \icmlcorrespondingauthor{Sharang M. Sriramu}{sms579@cornell.edu}
  \vskip 0.3in
]

\icmlkeywords{neural compression, channel simulation, rate--distortion, information theory}
\printAffiliationsAndNotice{\icmlEqualContribution}
\begin{abstract}
Neural compression is currently dominated by Nonlinear Transform Coding (NTC), which maps data to real-valued latents via continuous transforms. Despite its success, NTC suffers from train-test mismatch due to non-differentiable quantization, a ``smoothness bias" inherent in continuous transforms that precludes optimality for certain sources, and a loss of ``shaping gain" due to its use of scalar quantization. We propose \underline{S}oft\underline{B}inary \underline{C}oding (SBC), an end-to-end learning paradigm that bypasses these limitations by using a stochastic binary latent space. In the spirit of vector quantization, SBC employs discrete representations and compresses them through a novel fast binary channel simulation scheme, for which we provide a proof of rate optimality. Experimental gains on information-theoretic sources address NTC's limitations both theoretically and practically, establishing discrete binary structures as a viable path toward reaching optimal rate--distortion bounds. Surprisingly, SBC also achieves state-of-the-art performance on vector quantization of i.i.d.\ sources, exceeding Trellis Coded Quantization of the Gaussian source.

\end{abstract}

\begin{figure*}[t]
    \centering
    \begin{tikzpicture}[
        block/.style={rectangle, draw, align=center, minimum height=2em, inner xsep=1ex},
        arrow/.style={-Triangle, thick},
    ]
        \node (x) {$\strut X$};
        \node[block, right=of x] (f) {$\strut f_\theta$};
        \node[block, right=of f] (F) {$\strut \mathcal{F}(Z \mid \Rep)$};

        \node[block, right=14em of F] (g) {$\strut g_\phi$};
        \node[right=of g] (x_hat) {$\strut \hat X$};

        \coordinate[below=1.5em of F.south east] (expand_q);
        \node[block, fit=(f) (F) (expand_q), inner sep=1em] (q) {};
        \node[above left=1pt of q.south east] (q_label) {$\strut q_\theta(Z \mid X)$};

        \coordinate (mid_z) at ($(F)!0.5!(g)$);
        \coordinate (line_top) at ($(mid_z)+(0.5em, 2em)$);
        \coordinate (line_bot) at ($(mid_z)+(-0.5em, -2em)$);
        \draw (line_bot) -- (line_top);

        \node[anchor=east] (p1) at ($(q_label -| mid_z) - (1em,0)$) {$\strut p_\psi(Z)$};
        \node[anchor=west] at ($(q_label -| mid_z) + (1em,0)$) (p2) {$\strut p_\psi(Z)$};

        \node[right=1ex of x_hat, align=left] (F_explain) {
            NTC (dithered quantization): \\
            $\mathcal F = \mathcal U\bigl(Z \mid \Rep-\tfrac 1 2, \Rep + \tfrac 1 2\bigr)$ \\
            \\
            SoftBinary Coding: \\
            $\mathcal F = \mathrm{Bern}\bigl(Z \mid \frac {1 + \Rep} 2\bigr)$
        };
        
        \begin{scope}[shift={($(F)!0.79!(g) + (0, 0.25)$)}]
            \draw[thin, fill=white, rounded corners=0.5pt] (-0.15,-0.15) rectangle (0.15,0.15);
            \fill (0,0) circle (0.7pt); 
            \fill (-0.08,0.08) circle (0.7pt); \fill (0.08,0.08) circle (0.7pt);
            \fill (-0.08,-0.08) circle (0.7pt); \fill (0.08,-0.08) circle (0.7pt);
        \end{scope}

        \begin{scope}[shift={($(F)!0.85!(g) + (0, 0.25)$)}]
            \draw[thin, fill=white, rounded corners=0.5pt] (-0.15,-0.15) rectangle (0.15,0.15);
            \fill (0,0) circle (0.7pt); 
            \fill (-0.08,0.08) circle (0.7pt); \fill (0.08,-0.08) circle (0.7pt);
        \end{scope}

        \node[above=1.5em of g] (y) {$Y$};
        \draw[arrow, densely dotted] (y) -- (g);

        \draw[arrow] (x) -- (f);
        \draw[arrow] (f) -- node[above] {$\Rep$} (F);
        \draw[arrow, dashed] (F) -- (g) node[pos=0.4, above] {$Z$};
        \draw[arrow] (g) -- (x_hat);
    \end{tikzpicture}
    \caption{\label{fig:training_scheme}%
    Training scheme for learning-based lossy neural compression with channel simulation. The neural network $f_{\theta}$ produces parameters $\Rep$ of the encoder distribution, a parametric family $\mathcal{F}(Z \mid \Rep)$. The latent representation $Z$ is a sample from this family (operationally, channel simulation produces the sample at the decoder). The encoder $q_\theta$ and prior $p_{\psi}$ (a model of the marginal distribution of $Z$) evaluated over $Z$ determine the training rate $R = \mathbb E[D_{\text{KL}}(q_{\theta} \| p_{\psi})]$, while the decoder network $g_{\phi}$ reconstructs the input as $\hat X$ with training distortion $D = \mathbb E[d(X, \hat X)]$, optionally utilizing side information $Y$ (in the distributed compression setup).}
\end{figure*}

\section{Introduction}
The current landscape of neural data compression is almost entirely defined by Nonlinear Transform Coding (NTC)~\citep{balle2017endtoend, balle2018variational, minnen2018joint, balle_journal}. 
 NTC operates by mapping the source data to a real-valued latent representation ($\mathbb{R}^L$) via learned neural transforms. To compress the data for transmission over the binary noiseless channel, NTC models rely on a ``transform-quantize-entropy code'' pipeline, where latent elements are quantized to integers ($\mathbb{Z}^L$) and then compressed via arithmetic coding.
While originally conceived as an image compressor, the NTC framework has been successfully extended to a diverse array of modalities, including sources such as video \citep{Rippel_video, deep_video}, point clouds \citep{Quach_point-cloud, pang2022grasp} and 3D Gaussian splats \citep{wang2024contextgs, hac2024}.
While this approach has driven significant empirical success, at least three areas for improvement are identified in the existing literature:
\begin{enumerate}[label=\roman*), wide, labelindent=0pt]
\item The non-differentiability of hard quantization induces a mismatch between training and operational compression, necessitating surrogate differentiable approximations during training~\citep{Balle2016, Theis2017a, agustsson2020universally}. Alternatives such as the VQ-VAE formulation~\citep{van2017neural} are even worse in this regard---the training loss is entirely divorced from the operational rate, and entropy models need to be fit to the model post-hoc, with suboptimal results~\citep{Ozyilkan_ISIT23}.

\item The use of continuous transforms introduces an inherent ``smoothness bias''~\citep{bhadane2022neural, ozyilkan2024breaking}, limiting the model's ability to represent the sharp, discontinuous structures required to achieve optimal rate--distortion performance for certain sources. This is revealed by comparing compressors trained on information-theoretic sources \citep{sawbridge, bhadane2022neural} against theoretical optima.

\item Realizing \emph{shaping gains}~\citep{zamir_lattice} via high-dimensional vector quantization (VQ) is computationally prohibitive~\citep{VQ_Gray}; NTC schemes are typically restricted to scalar quantization~\citep{balle_journal}, thereby leaving performance gains on the table relative to the optimal asymptotic rate--distortion tradeoff~\citep{lei2025approaching}.
\end{enumerate}
The latter two points are particularly relevant in the application of NTC to \emph{non-image} sources: Image compression tends to focus on the high-rate regime, where densities may be assumed smooth and shaping gains are negligible. Optimally compressing data such as sparse point cloud attributes may benefit from more flexible schemes.

In this work, we propose \emph{SoftBinary Coding} (SBC), an alternative framework for end-to-end neural compression that directly addresses these limitations. i) SBC replaces deterministic quantization and entropy coding with \emph{channel simulation}~\citep{li2024channel}, resulting in a training objective that matches the rate--distortion criterion evaluated at test time. ii) At the same time, it restricts the latent space to binary variables $\{0,1\}^L$ which eliminates the smoothness bias induced by continuous-valued transforms, and iii) employs a scalable high-dimensional scheme that achieves the shaping gains associated with VQ.

Our scheme depends on a confluence of developments in different fields. First, we rely
on $\mathtt{PolarSim}$, a scalable binary-output channel simulator by~\citet{sriramu2024fast}. It applies the polar transform to several independent binary output channels, inducing polarized \emph{subchannels} (see the discussion in Sec.~\ref{sec:PolarizationTheoremIntro} to follow) which admit efficient simulation schemes. $\mathtt{PolarSim}$, however, requires that the binary outputs have uniform marginals, but 
we find that \emph{nonuniform} marginals are needed here, as the trained model would otherwise not have enough flexibility to obtain the desired performance.
We thus extend $\mathtt{PolarSim}$ to allow for nonuniform outputs and
indeed independent but not identically distributed channels. 
We prove that this extension is rate-optimal.
Second, for end-to-end training, we rely on the \emph{VarGrad} method of \citet{richter2020vargrad}, which provides low-variance unbiased gradient estimates for latent-variable models with discrete latents such as ours. Our results demonstrate that this integrated scheme not only performs well empirically but also establishes SBC as a viable path forward for neural data compression.

\section{Background}
\subsection{Lossy neural data compression} 
\label{sec:LossyNDCBackg}
Nonlinear Transform Codes learn compressible representations $Z$ of the source $X$; they resemble probabilistic latent-variable models such as Variational Autoencoders \citep[VAEs;][]{kingma2014auto}. The goal is to minimize a joint rate--distortion objective over model parameters $\theta, \phi, \psi$:
\begin{equation}
    L(\theta, \phi, \psi) = R + \lambda D. \label{eq:RD}
\end{equation}
The distortion $D$ measures the fidelity between $X$ and $\hat{X}$ (the source as reconstructed from $Z$), typically via mean squared error, though perceptual or realism-based metrics are becoming increasingly popular~\citep{mentzer2020high, Balle_2025_CVPR}. The rate $R$ gives the expected code length in bits necessary to communicate the representation $Z$.

More concretely, neural compressors consist of three parameterized components: an encoder distribution $q_{\theta}$, a decoder $g_{\phi}$, and an entropy model or prior $p_{\psi}$ (Figure~\ref{fig:training_scheme}). The encoder maps $X$ to a latent representation $Z$, which is then reconstructed by the decoder as $\hat{X} = g_{\phi}(Z)$.

The two components in~\eqref{eq:RD} are generally of the form:
\begin{align}
    R &= \mathbb{E}_{\substack{X \sim p_X \\ Z \sim q_{\theta}(Z \mid X)}} [\log q_{\theta}(Z \mid X) - \log p_{\psi}(Z)]
    \label{eq:rate}
    \\ \nonumber
    & = \mathbb{E}_{X \sim p_X} [D_{\mathrm{KL}}(q_{\theta}(Z \mid X) \parallel p_{\psi}(Z))],
     \\
    D &= \mathbb{E}_{\substack{X \sim p_X \\ Z \sim q_{\theta}(Z\mid X)}} [d(X, g_\phi(Z))]. \label{eq:dist}
\end{align}
Taken together, the terms resemble the training loss of a VAE, where the likelihood term of the VAE is replaced by the distortion term weighted with a hyperparameter $\lambda$, determining the trade-off between compression and reconstruction fidelity
\citep{balle2017endtoend}.

In NTC, $q_\theta$ is a standard uniform distribution with center $\Rep$; sampling from $q$ thus corresponds to adding independent uniform noise to $\Rep$ \citep{balle2017endtoend}. This induces a rate–distortion objective that is both amenable to gradient-based optimization and admits exact operational realization. It is differentiable, such that an unbiased estimate of the gradient of the loss with respect to the encoder parameters $\theta$ can be obtained using Monte Carlo sampling (a trivial special case of the ``reparameterization trick''). It can be achieved operationally using arithmetic coding and \emph{dithered quantization}~\citep{dither_Zamir_Feder}. Here, the quantization offset (the dither) is randomized uniformly using a pseudo-random number generator with shared seed \citep{roberts1962picture, dither_Schuchman}.

Assuming the arithmetic coder operates on a sufficiently large block, grouping many samples together to maximize efficiency, $R$ is essentially the expected code length of the compressed representation~\citep{elements_of_information_theory}: The first term in~\eqref{eq:rate} is constant and equal to 0, and the second amounts to the cross-entropy between the marginal distribution of latents and the prior (entropy model) $p_{\psi}$.

In practice, the dither is usually fixed after training, resulting in an entropy-coded scalar quantization (ECSQ) method, which can achieve better results than dithered quantization. However, this deviates from the framework presented here, as it makes the encoder deterministic. In that case, the operational rate and distortion are:
\begin{align}
    R_{\mathrm{ECSQ}} &= \mathbb{E}_{X \sim p_X} [-\log_2 p_{\psi}(\lfloor f_\theta(X) \rceil )], \\
    D_{\mathrm{ECSQ}} &= \mathbb{E}_{X \sim p_X} [d(X, g_\phi(\lfloor f_\theta(X) \rceil ))],
\end{align}
where $\lfloor \cdot \rceil$ denotes rounding to the integer grid $\mathbb{Z}^n$ shifted by the fixed dither. Due to this operation, the operational loss function is unfortunately not differentiable. There is thus necessarily a disconnect between the training loss and the operational loss. Finding the optimal fixed dither post-training may require ad-hoc methods. \citet{agustsson2020universally} propose an annealing procedure to interpolate between \eqref{eq:RD} and the operational loss during training.

This framework can be generalized by selecting encoder distributions $q_\theta$ from larger parametric families $\mathcal F$, and there are potential advantages to doing so. While gradient estimators may exist for the resulting model, making it possible to train it, operationally implementing it requires a rate-efficient method to \emph{simulate} the \emph{channel} $q_\theta(Z|X)$; i.e., to enable the decoder to draw a sample $Z$ without knowing $X$.

\subsection{Channel simulation} 
\label{sec:channelSimBackg}

\emph{Channel simulation} is a paradigm for communicating samples drawn from a noisy channel using a rate-efficient description. It can be interpreted as a ``soft'' generalization of entropy-coded quantization, where randomness replaces deterministic quantization decisions. We provide a formal definition below (using the information-theoretic convention of indicating the length of vectors by superscripts).

Consider the sequence of independent random variables
\begin{align}
    ({\Rep}_i,Z_i) \sim P_{{\Rep}_i, Z_i}, \text{ for } i \in \{1,\ldots\},
\end{align}
where each $P_{{\Rep}_i, Z_i}$ is a probability measure on $\mathcal{{\Rep}} \times \mathcal{Z}$, and $P_{{\Rep}_i}$ denotes the corresponding marginal on $\mathcal{{\Rep}}$. We refer to the conditionals $P_{Z_i\mid {\Rep}_i}$ as the \emph{target channels}. Define the \emph{common randomness} to be a random variable $S \sim P_S$ that takes values in some arbitrary $\mathcal{S}$, is independent of the \emph{source sequence} ${\Rep}^k$ 
for all $k \in \mathbb{N}$, and is infinitely divisible.

For a given block length $N = 2^n$, for some $n \in \mathbb{N}$, the \emph{encoder} observes the source realization ${\Rep}^N$ along with the common randomness $S$ and outputs a prefix-free bit string $\mathsf{Enc}({\Rep}^N, S)$, which it transmits losslessly to the decoder. The \emph{decoder} uses this message in conjunction with the common randomness to produce a reconstruction $Z^N = \mathsf{Dec}(\mathsf{Enc}({\Rep}^N, S), S)$ such that $({\Rep}^N,Z^N) \sim \prod\limits_{i=1}^NP_{{\Rep}_i, Z_i}$.

Our goal is to design a scheme, consisting of the common randomness, the encoder, and the decoder, which minimizes the average amortized length of the encoder's message $\frac{\ell(\mathsf{Enc}(\cdot))}{N}$. The smallest asymptotically achievable \emph{rate}, i.e., the average amortized communication cost, is known~\citep{LiElGamal} to be the \emph{mutual information} $\frac{1}{N}\sum\limits_{i = 1}^N I(V_i;Z_i) = \frac{1}{N}\sum\limits_{i = 1}^N D_{\mathrm{KL}}(P_{Z_i|V_i} \parallel P_{Z_i})$. In the context of neural compressors, this cost is a function of the network parameters (see~\eqref{eq:rate}) and is directly used as part of the training objective.

Dithered quantization (Sec.~\ref{sec:LossyNDCBackg}) can be viewed as a scheme for simulating
a particular channel, namely one with uniform additive noise~\citep{dither_Zamir_Feder}. 
We improve upon NTC by replacing
dithered quantization with a more powerful channel simulator.
Existing general-purpose channel simulation schemes~\citep{LiElGamal, flamich2023greedy, phan2025channel} are based on random coding and have exponential computational complexity in the block length~$N$, rendering them impractical. 
However, by restricting our model to use binary latents, i.e., $\mathcal{Z} = \{0,1\}$, we are able to utilize scalable schemes from the error-correcting code literature. In particular, \emph{polar codes}~\citep{arikan2009channel} have shown promising performance in different regimes of channel simulation~\citep{chou2018empirical, sriramu2024fast}.     
\section{Simulation of independent binary-output channels}
\label{sec:theory_results_simulation}

SBC uses $\mathtt{PolarSim}$~\citep{sriramu2024fast}, an asymptotically rate-optimal scheme for simulating binary-output channels with pseudolinear computational complexity. In its original form, $\mathtt{PolarSim}$ is limited to i.i.d.\ channels with binary, uniform outputs. Since the latents that emerge from the trained model are
not guaranteed to be identically distributed, as noted in the introduction,
we extend $\mathtt{PolarSim}$ to allow for independent, but not necessarily identically-distributed, channels with arbitrary marginal distributions for the output bits. We develop a generalization of Arıkan's source polarization theorem~\citep{arikan2010source} and use it to prove first-order optimality of the generalized $\mathtt{PolarSim}$ scheme. While source polarization has been generalized to different source models~\citep{guruswami2018algorithmic, csacsouglu2019polar}, the non-stationary memoryless regime is underexplored (but see \citet{9184107} for a channel coding variant).
In this section, as in Sec.~\ref{sec:channelSimBackg}, we follow the information-theoretic convention of indicating the length of vectors by superscripts.

\begin{figure*}[t] 
    \centering
    \begin{tikzpicture}[
        block/.style={rectangle, draw, align=center, minimum height=2em, inner xsep=1ex},
        arrow/.style={-Triangle},
    ]
        \node (x_1) {$\strut X^d_{(1)}$};
        \node[below=of x_1] (x_m) {$\strut X^d_{(m)}$};
        \path (x_1) -- (x_m) node[midway] {$\strut \vdots$};

        \node[block, right=2em of x_1] (f_1) {$\strut f_\theta$};
        \node[block, right=2em of x_m] (f_m) {$\strut f_\theta$};
        \path (f_1) -- (f_m) node[midway] (f_mid) {$\strut \vdots$};

        \node[below right=0em and .4em of f_m] (pi) {$\Pi$};
        \node[anchor=south] at (pi|-f_1) (rep_1) {\small $\Rep^L_{(1)}$};
        \node[anchor=south] at (pi|-f_m) (rep_m) {\small $\Rep^L_{(m)}$};
        \path (rep_1) -- (rep_m) node[midway] {$\strut \vdots$};

        \draw[thick, arrow] (x_1) -- (f_1);
        \draw[thick, arrow] (x_m) -- (f_m);

        \node[block, right=4em of f_mid, minimum height=10em] (concat) {\rotatebox{90}{concat}};
        \draw[thick, arrow] (f_1) -- (concat.west|-f_1);
        \draw[thick, arrow] (f_m) -- (concat.west|-f_m);
        \draw[thick, arrow] (pi) -- coordinate[pos=0.4] (pi_branch) (concat.west|-pi);

        \node[block, right=5em of concat, minimum width=13em] (pr_2) {\small $\strut U_2 \sim \mathrm{Pr}(U_2 \mid U_1, V^N, \Pi)$};
        \node[block, above=2em of pr_2, minimum width=13em] (pr_1) {\small $\strut U_1 \sim \mathrm{Pr}(U_1 \mid V^N, \Pi)$};
        \node[block, below=2em of pr_2, minimum width=13em] (pr_N) {\small $\strut U_N \sim \mathrm{Pr}(U_N \mid U_1^{N-1}, V^N, \Pi)$};
        \draw[thick, arrow] (pr_1) -- (pr_2) node[midway, right] {\small $U_1$};
        \draw[thick] (pr_2) -- ($(pr_2)!0.33!(pr_N)$);
        \draw[thick, dotted] ($(pr_2)!0.33!(pr_N)$) -- ($(pr_2)!0.55!(pr_N)$);
        \draw[thick, arrow] ($(pr_2)!0.55!(pr_N)$) -- (pr_N);
        \node[right] at ($(pr_2)!0.5!(pr_N)$) {\small $U_1^{N-1}$};
        \draw[thick, arrow] (concat) -- (pr_2) node[pos=0.4, above] {\small $\Rep^N, \Pi$} coordinate[near end] (roc);
        \draw[thick, arrow] (roc) |- (pr_1);
        \draw[thick, arrow] (roc) |- (pr_N);

        \node[block, right=3em of pr_2, minimum height=10em] (ipe) {\rotatebox{90}{InversePolarEnc}};
        \draw[thick, arrow] (pi_branch) -- ++(0,-1.75em) -| (ipe.south);
        \draw[thick, arrow] (pr_1) -- (pr_1-|ipe.west) node[midway, above] {\small $\strut U_1$};
        \draw[thick, arrow] (pr_2) -- (pr_2-|ipe.west) node[midway, above] (u_2) {\small $\strut U_2$};
        \draw[thick, arrow] (pr_N) -- (pr_N-|ipe.west) node[midway, above] (u_N) {\small $\strut U_N$};
        \path (u_2) -- (u_N) node[midway] (u_mid) {$\strut \vdots$};
    
        \node[block, right=2em of ipe, minimum height=10em] (split) {\rotatebox{90}{split}};
        \draw[thick, arrow] (ipe) -- (split);

        \node[block, xshift=5em] at (f_1-|split) (g_1) {$\strut g_\phi$};
        \node[block, xshift=5em] at (f_m-|split) (g_m) {$\strut g_\phi$};
        \path (g_1) -- (g_m) node[midway] (g_mid) {$\strut \vdots$};
        \draw[thick, arrow] (split.east|-g_1) -- (g_1) node[midway, above] (z_1) {\small $\strut Z^L_{(1)}$};
        \draw[thick, arrow] (split.east|-g_m) -- (g_m) node[midway, above] (z_m) {\small $\strut Z^L_{(m)}$};
        \path (z_1) -- (z_m) node[midway] {$\strut \vdots$};

        \node[above=1em of g_1] (y_1) {\small $Y$};
        \draw[thick, arrow, densely dotted] (y_1) -- (g_1);
        \node[below=1em of g_m] (y_m) {\small $Y$};
        \draw[thick, arrow, densely dotted] (y_m) -- (g_m);

        \node[right=2em of g_1] (xhat_1) {$\strut \hat X^d_{(1)}$};
        \node[right=2em of g_m] (xhat_m) {$\strut \hat X^d_{(m)}$};
        \path (xhat_1) -- (xhat_m) node[midway] {$\strut \vdots$};
        \draw[thick, arrow] (g_1) -- (xhat_1);
        \draw[thick, arrow] (g_m) -- (xhat_m);
    \end{tikzpicture}
    \caption{System diagram for the operational scheme using channel simulation (Sec.~\ref{sec:theory_results_simulation}). We concatenate the encoded messages $V_{(i)}^L = f_{\theta}(X_i)$ for several independent source realizations and use Algorithms~\ref{alg:enc} and~\ref{alg:dec} to generate the latent samples $Z^N$ at the decoder.
    \label{fig:polar_sim} 
    }
\end{figure*}

\subsection{Simulating polarized channels} \label{sec:PolarizationTheoremIntro}
\citet[Sec.~3.1]{sriramu2024fast} describes the following simulation scheme for simulating a binary output channel $P_{Z|V}$:
\begin{enumerate}
    \item Generate $S \sim \mathrm{Unif}[0,1]$ using common randomness.
    \item Upon observing a realization $v \sim V$ at the encoder, compute $Z = \Ind\{S > P_{Z|V}(0|v)\}$.
    \item Compute $\tilde{Z} = \Ind\{S > P_Z(0)\}$ at both the encoder and the decoder.
    \item Transmit the difference $\Delta = Z \oplus \tilde{Z}$ to the decoder after lossless compression, where $\oplus$ denotes the XOR operation, so that $Z = \Delta \oplus \tilde Z$ can be recovered.
\end{enumerate}
\citet{sriramu2024fast} assume that $P_Z$ is uniform, but the scheme is easily
extended to the general case above. 
Even for channels with uniformly distributed outputs, this simple scheme is suboptimal in general in the sense that its rate, which is approximately $H(\Delta)$ if the overhead due to the lossless compressor is amortized over several runs, exceeds the mutual information lower bound $I(Z;V)$ \citep[Appendix~A]{sriramu2024fast}. However, if the channel is \emph{polarized}, i.e., if the mutual information $I(Z;V)$ is close to either $0$ or $1$, \citet[Appendix~A]{sriramu2024fast} shows that the gap between $H(\Delta)$ and $I(Z;V)$ vanishes. 

If we remove the uniformity assumption on $P_Z$ and consider a larger class of source--channel pairs, one additional polarized regime becomes possible: channels with near-deterministic output. In this regime, the same correction-based scheme is expected to be near-optimal, since near-determinism leaves little room for $P_{Z|V}$ to differ substantially from $P_Z$ with high $P_V$-probability.

This suggests that polarizing transforms, which replace the target channel with polarized \emph{subchannels}, in conjunction with \citet{sriramu2024fast}'s simulation scheme, can lead to efficient channel simulation algorithms even for channels with non-uniform output marginals. In the following subsections, we define such a polarizing transform and the resulting simulation algorithm, and prove that it is asymptotically rate-optimal.

\subsection{Source polarization}
Our construction is based on a modified version of the source-polarizing transform proposed by~\citet{arikan2010source}. We first review the basic $2\times 2$ transform that underlies the construction.

Let $Z_1 \sim \mathrm{Bern}(p_1)$ and $Z_2 \sim \mathrm{Bern}(p_2)$ be independent, with
$p_2 < p_1 < \tfrac{1}{2}$. Define the invertible transform
\begin{align}
(U_1,U_2) \triangleq (Z_1 \oplus Z_2,\, Z_2). \label{eq:Main2x2}
\end{align}
Since this mapping is bijective, the joint entropy is preserved:
\begin{align} \nonumber
H(Z_1) + H(Z_2)
= H(U_1,U_2)
= H(U_1) + H(U_2 \mid U_1).
\end{align}
However, the entropies of the individual components become more extremal: $H(U_1) > H(Z_1)$, since the XOR operation mixes $Z_1$ with the additional uncertainty in $Z_2$, thereby increasing the marginal uncertainty of $U_1$ (and correspondingly decreasing $H(U_2 \mid U_1)$).

\citet{arikan2010source} exploits this mechanism recursively for $N$ i.i.d.\ sources by repeatedly applying the same entropy-splitting operation $n$ times. He shows that the induced conditional entropies \emph{polarize}, so that the fraction of transformed variables whose conditional entropies remain bounded away from both $0$ and $1$ vanishes asymptotically.

Theorem~\ref{theorem:GeneralPolarizationTheorem} shows an analogous result in the non-stationary setting. To facilitate analysis, it uses a modified version of Arıkan's polar transform~\citep{arikan2010source}, in which a random permutation is applied before each recursive stage, and the number of recursion levels is fixed at $n^*$, which need not equal $n$. The modified transform, called \RPT, is described in Appendix~\ref{section:RPTDesc}. The proof of Theorem~\ref{theorem:GeneralPolarizationTheorem} appears in Appendix~\ref{section:GeneralizedPolarizationProof}.

\subsection{Algorithm for channel simulation}
\label{sec:channel-simulation-algo}
We will apply \RPT to the channel outputs $Z^N$, with the random seed $\Pi$ obtained from an independent split of the common randomness $S$, and $n^* \in \{1,\cdots,n\}$ chosen according to Theorem~\ref{theorem:GeneralPolarizationTheorem} obtaining 
\begin{equation}
    U^N = \RPT(Z^N, \Pi, n^*).
\end{equation} This transforms the problem of simulating $N$ independent target channels ${\Rep}_1 \to Z_1$, ${\Rep}_2 \to Z_2$, \dots, ${\Rep}_N \to Z_N$ into the problem of simulating the dependent \emph{subchannels} $({\Rep}^N, \Pi) \to U_1$, $({\Rep}^N, U_1, \Pi) \to U_2$, \dots, $({\Rep}^N, U^{N-1}, \Pi) \to U_N$. 

Theorem~\ref{theorem:GeneralPolarizationTheorem} guarantees that, for all but a vanishing fraction of indices $i$, both $H(U_i\mid U^{i-1},{\Rep}^N,\Pi)$ and $H(U_i\mid U^{i-1},\Pi)$  are close to either $0$ or $1$ (in the latter case, one takes $V^N$ to be null). Consequently, the corresponding mutual informations,
\begin{multline}
    I(U_i;{\Rep}^N \mid U^{i-1}, \Pi) \\ = H(U_i|U_1^{i-1}, \Pi) - H(U_i|U_1^{i-1}, V^N, \Pi),
\end{multline} are also polarized. We will then use \citet{sriramu2024fast}'s simple scheme (from Sec.~\ref{sec:PolarizationTheoremIntro}) to simulate these polarized subchannels.

Algorithms~\ref{alg:enc} and~\ref{alg:dec} describe the full scheme, retaining a similar structure to $\mathtt{PolarSim}$ where both the noise-free and the noisy subchannels are handled in a unified manner.
The $\mathtt{Compress}$ and $\mathtt{Decompress}$ routines can be any prefix-free lossless entropy coding scheme. We use arithmetic coding~\citep{rissanen1976generalized} with a 
probability table that is computed offline via Monte Carlo estimation of $\Pr(\Delta_i = 1)$ and shared between the encoder and decoder.  

\begin{algorithm}[t]
\caption{Generalized $\mathtt{PolarSim}$: Encoder side}
\label{alg:enc}
\begin{algorithmic}[1]
\REQUIRE Block length $N$
\REQUIRE Source sequence ${\rep}^N \sim \prod_{i=1}^N P_{{\Rep}_i}$
\REQUIRE Random seed $s^N \overset{\text{i.i.d.}}{\sim} \mathrm{Unif}(0,1)$
\REQUIRE Permutation randomness $\Pi \in \mathcal{S}$
\ENSURE Compressed bit string $b \in \{0,1\}^*$

\FOR{$i = 1,\ldots,N$}
    \STATE $P_i \leftarrow \Pr(U_i = 0 \mid U^{i-1}=u^{i-1}, \Pi)$
    \STATE $Q_i \leftarrow \Pr(U_i = 0 \mid U^{i-1}=u^{i-1}, {\Rep}^N={\rep}^N, \Pi)$
    \STATE $\tilde{u}_i = \Ind\left\{ s_i > P_i\right\}$

    \STATE $u_i = \Ind\left\{ s_i > Q_i\right\}$

    \STATE $\Delta_i \leftarrow u_i \oplus {\tilde{u}}_i$
\ENDFOR

\STATE $b \leftarrow \textsc{Compress}(\Delta^N)$
\STATE \textbf{return} $b$
\end{algorithmic}
\end{algorithm}

\begin{algorithm}[t]
\caption{Generalized $\mathtt{PolarSim}$: Decoder side}
\label{alg:dec}
\begin{algorithmic}[1]
\REQUIRE Block length $N$
\REQUIRE Compressed bit string $b \in \{0,1\}^*$
\REQUIRE Shared randomness $s^N \overset{\text{i.i.d.}}{\sim} \mathrm{Unif}(0,1)$
\REQUIRE Permutation randomness $\Pi \in \mathcal{S}$
\REQUIRE Probability table $\overline{P}^N \in [0,1]^N$
\ENSURE Reconstructed output $z^N \in \{0,1\}^N$

\STATE $\Delta^N \leftarrow \textsc{Decompress}(b^*, \overline{P}^N)$

\FOR{$i = 1,\ldots,N$}
    \STATE $P_i \leftarrow \Pr(U_i = 0 \mid U^{i-1}=u^{i-1}, \Pi)$

    \STATE $\tilde{u}_i = \Ind\left\{ s_i > P_i\right\}$

    \STATE $u_i \leftarrow \Delta_i \oplus {\tilde{u}}_i$
\ENDFOR

\STATE $z^N \leftarrow \textsc{InversePolarEnc}(u^N, \Pi)$
\STATE \textbf{return} $z^N$
\end{algorithmic}
\end{algorithm}
The source-conditioned subchannel probabilities $\Pr(U_i = 0\mid U^{i-1} = u^{i-1}, {\Rep}^N = {\rep}^N, \Pi)$ for $i \in \{1, \ldots, N\}$ are computed for the given $n^*$ using a permutation-aware variant of Arıkan’s recursive polar decoding algorithm \citep{arikan2009channel}. The required modification is straightforward, consisting only of accounting for the prescribed permutations at each recursion level, and therefore omitted for brevity.

The corresponding source-free conditionals $\Pr(U_i = 0 \mid U^{i-1} = u^{i-1})$ can also be evaluated using the same polar decoder for the degenerate channels $W_{Z_i\mid {\Rep}}(Z_i\mid{\Rep} = x) = P_{Z_i}(Z_i)$ for all $x \in \mathcal{{\Rep}}$.

The $\mathtt{InversePolarEnc}$ routine computes the inverse of \RPT and recovers the simulation output by applying the same recursive transform in reverse order, using the inverse permutation at each recursion level

We show that our scheme is asymptotically optimal, achieving the mutual information lower bound in the large-block length limit. 

\begin{theorem} \label{theorem:ChannelSimGeneral}
    Algorithms~\ref{alg:enc} and~\ref{alg:dec} satisfy the following:
    \begin{enumerate}
        \item They simulate the target ensemble of channels exactly.
        \item There exists a choice for the probability table $\overline{P}^N$ such that the scheme is first-order optimal in its rate:
        \begin{align}
            \lim\limits_{N \to \infty}\frac{E\left[\ell(b)\right] - \sum\limits_{i=1}^N I({\Rep}_i;Z_i)}{N} = 0,
        \end{align}
    \end{enumerate}
    where $\ell(\cdot)$ represents the length of a binary string (recall from Sec.~\ref{sec:channelSimBackg}).
\end{theorem}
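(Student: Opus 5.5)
For part 1 (exactness), I will argue by induction over $i$ that the encoder's bits $u^N$ have, conditionally on $({\Rep}^N,\Pi)$, the law of $\RPT(Z^N,\Pi,n^*)$ with $Z^N\sim\prod_i P_{Z_i\mid {\Rep}_i}$. Since $s_i$ is uniform and independent of $({\Rep}^N,\Pi,s^{i-1})$, we have $\Pr(u_i=0\mid u^{i-1},{\rep}^N,\Pi)=\Pr(s_i\le Q_i)=Q_i$, which is exactly the true conditional. The chain rule then gives the correct joint law of $U^N$. The decoder recovers $u_i=\Delta_i\oplus\tilde u_i$ exactly, because $\tilde u_i$ depends only on $s_i$, $\Pi$ and the already recovered $u^{i-1}$, and the lossless code is prefix-free. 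Since \RPT\ is a bijection for each fixed $\Pi$, $\mathtt{InversePolarEnc}$ returns $Z^N$ with conditional law $\prod_i P_{Z_i\mid {\Rep}_i}$. The permutation $\Pi$ and $s^N$ are independent of ${\Rep}^N$, so the joint law of $({\Rep}^N,Z^N)$ is the target product.

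For part 2, I choose $\overline P_i=\Pr(\Delta_i=1)$ and code $\Delta^N$ with an arithmetic coder for independent bits of these probabilities. The expected length is then at most $\sum_i h(\overline P_i)+2\le \sum_i H(\Delta_i)+2$, and the constant vanishes after dividing by $N$. The main per-index bound I will establish is $H(\Delta_i)\le I(U_i;{\Rep}^N\mid U^{i-1},\Pi)+\epsilon_i$, with $\epsilon_i$ small on polarized indices. Given $(u^{i-1},\Pi,{\rep}^N)$, $\Delta_i=1$ exactly when $s_i$ lies between $P_i$ and $Q_i$, so $\Pr(\Delta_i=1)=\mathbb E|P_i-Q_i|$. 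I will split the indices into three classes:
\begin{itemize}
\item[(a)] Indices with $H(U_i\mid U^{i-1},\Pi)\le\delta$. Then $P_i$ is near $0$ or $1$ for most pasts. Since $H(U_i\mid U^{i-1},{\Rep}^N,\Pi)\le\delta$ as well, $Q_i$ is near the same endpoint. Hence $\mathbb E|P_i-Q_i|=O(\sqrt\delta)$ (via Pinsker or Markov), and $H(\Delta_i)\le h(O(\sqrt\delta))$, which is small. This is the "near-deterministic" regime.
\item[(b)] Indices with both conditional entropies close to $1$ (and hence $I\approx 0$). Then $P_i$ and $Q_i$ are both near $1/2$ in expectation, again via Pinsker applied to the conditional KL, so $\mathbb E|P_i-Q_i|$ is small and $H(\Delta_i)$ is small.
\item[(c)] Indices with $H(U_i\mid U^{i-1},\Pi)\approx 1$ and $H(U_i\mid U^{i-1},{\Rep}^N,\Pi)\approx 0$, so $I\approx 1$. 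Here I use only $H(\Delta_i)\le 1\le I+\delta$.
\end{itemize}

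By Theorem~\ref{theorem:GeneralPolarizationTheorem}, with $n^*$ chosen appropriately, the fraction of indices outside (a)–(c) vanishes; each such index costs at most $1$ bit. Summing over $i$ gives
$\frac1N\mathbb E\ell(b)\le \frac1N\sum_i I(U_i;{\Rep}^N\mid U^{i-1},\Pi)+o(1)$. To finish, I will use the chain rule together with bijectivity of \RPT\ and independence of $\Pi$ from $({\Rep}^N,Z^N)$. These give $\sum_i I(U_i;{\Rep}^N\mid U^{i-1},\Pi)=I(U^N;{\Rep}^N\mid\Pi)=I(Z^N;{\Rep}^N)=\sum_i I({\Rep}_i;Z_i)$, where the last equality uses independence across $i$. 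The reverse inequality is the standard converse (the rate is at least the mutual information), so the limit equals $0$.

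The main obstacle is making cases (a) and (b) quantitative when the conditioning variables are random: I need small conditional entropies to translate into small $\mathbb E|P_i-Q_i|$. My first attempt will bound $\mathbb E|P_i-Q_i|\le \mathbb E\sqrt{\tfrac{\ln 2}{2}D(Q_i\|P_i)}\le\sqrt{\tfrac{\ln 2}{2}I(U_i;{\Rep}^N\mid U^{i-1},\Pi)}$ via Jensen. This handles every index with small mutual information at once, which covers both (a) and (b), so I will state the final case split as $I\le\delta$ versus $I\ge 1-\delta$, the latter requiring $H(\Delta_i)\le 1$ only.
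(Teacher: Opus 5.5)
Your proof is correct, and after the shared starting point $\Pr(\Delta_i=1)=\mathbb E|P_i-Q_i|$ it takes a different and shorter route than the paper.

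The paper bounds $\mathbb E|P_i-Q_i|$ via Cauchy--Schwarz, the law of total variance, and an entropy--variance lemma built on Tops\o{}e's bounds for $h_B$. This gives $\Pr(\Delta_i=1)\le\frac12\sqrt{H(U_i\mid U^{i-1},\Pi)-H(U_i\mid U^{i-1},V^N,\Pi)^2}$. Since this bound is not expressed through the subchannel mutual information, the paper then has to partition the indices twice (by unconditional and by conditional entropy). It estimates the set sizes through the conservation identities $\sum_i H(U_i\mid U^{i-1},\Pi)=\sum_i H(Z_i)$ and $\sum_i H(U_i\mid U^{i-1},V^N,\Pi)=\sum_i H(Z_i\mid V_i)$. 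From these it concludes that the expensive set $A_{\mathrm{high}}\cap B_{\mathrm{low}}$ has at most about $\sum_i I(V_i;Z_i)$ elements.

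Your Pinsker--Jensen bound $\Pr(\Delta_i=1)\le\sqrt{\tfrac{\ln 2}{2}\,I(U_i;V^N\mid U^{i-1},\Pi)}$ holds at every index and charges each index against its own mutual information. Polarization is then needed only to put $I(U_i;V^N\mid U^{i-1},\Pi)$ near $0$ or $1$ for most $i$. The chain rule with bijectivity turns the total into $\sum_i I(V_i;Z_i)$ exactly, with no counting at all.

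You also spell out two things the paper handles only briefly. The exactness argument takes one line in the paper. The converse $\mathbb E[\ell(b)]\ge\sum_i I(V_i;Z_i)$ is left implicit there, although it is what upgrades the $\limsup$ bound to the stated limit. In exchange, the paper's version works directly with the entropy conditions that Theorem~\ref{theorem:GeneralPolarizationTheorem} provides. It also makes the $\mathtt{PolarSim}$ picture explicit: about $\sum_i I(V_i;Z_i)$ indices cost one bit each and the rest are nearly free.

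Neither route lessens the dependence on that theorem. You, like the paper, need the conditional and the unconditional entropies to polarize at one common $n^*$, while the theorem as stated supplies an $n^*$ for each ensemble separately. Your phrase ``with $n^*$ chosen appropriately'' hides the same step the paper takes tacitly. It can be closed by choosing $n^*$ in the theorem's proof to minimize the sum of the two non-polarized fractions.
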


In practice, estimation errors in the probability table and the marginals $P_{Z_i}$ contribute to a slight overhead (even considering large block lengths).

The overall complexity of the scheme matches that of $\mathtt{PolarSim}$, which is $O(N \log N)$. Although random permutations are applied, they can be implemented with $O(N)$ time complexity per level~\citep{knuth1981seminumerical}, ensuring an $O(N \log N)$ cumulative permutation overhead. 

\section{Neural compression with SoftBinary Coding}

\label{sec:neural_compression_w_SBC}

SBC uses the generalized $\mathtt{PolarSim}$ scheme to simulate independent realizations of the \emph{softbinary} channel,
\begin{align} \nonumber
    q_{\theta}(\cdot \mid x) = \mathrm{Bern}\left( \frac{1 + v}2\right), \text{ for } v = f_\theta(x) \in (-1,1),
\end{align}
drawing latents $Z_i \sim q_{\theta}(\cdot \mid X_i)$.
In this section, we discuss how the corresponding objective~\eqref{eq:RD} can effectively be minimized using gradient-based optimization, in spite of the apparent discreteness. Further details
are provided in Appendix~\ref{sec:more-training-details}.

\subsection{Training}
As sampling from $q_\theta(Z\mid X)$ produces discrete values, it is a challenge to obtain a gradient for $\theta$ useful for stochastic optimization with the usual Monte Carlo method. However, in contrast to the scalar quantization of NTC, $q_\theta$ here is stochastic, which widens the choice of available gradient estimators. Here, we use the recently proposed unbiased and low-variance estimator \emph{VarGrad}~\citep{richter2020vargrad}, which was developed for latent variable models such as VAEs. Central to VarGrad is the identity
\begin{multline}
    \nabla_\theta D_{\mathrm{KL}}\left(q_\theta(Z) \parallel p(Z\mid X)\right) \\ = \frac{1}{2} \nabla_\theta \text{Var}_{Z \sim r} \left(\log \frac{q_\theta(Z)}{p(Z \mid X)} \right) \Bigg\vert_{r=q_\theta}, \label{eq:vargrad}
\end{multline}
where $p(Z \mid X)$ is the (intractable) posterior. In words, the gradient of the Kullback--Leibler divergence (KLD) with respect to the parameters of the encoder $q_{\theta}$ can be obtained by evaluating the derivative of the variance of the log probability ratio, taken over samples from $q_\theta$---that is, as long as $q_\theta$ is stochastic, since the variance would otherwise be zero. Fortunately, it turns out that for SBC, it is, and that the proof for \eqref{eq:vargrad} given by \citet[Appendix A.1]{richter2020vargrad} holds for any differentiable expression in place of $p(Z\mid X)$. This makes it possible to apply the identity to the entire loss function~\eqref{eq:RD} (although it may not represent a valid KLD) and obtain unbiased derivatives for $\theta$.

Even with unbiased gradient estimates, stochastic optimization is not guaranteed to find a global minimum of the loss function. While it is difficult to assess the characteristics of the loss function precisely, we have found that the same optimization with a different initialization of model parameters $(\theta, \phi, \psi)$ can indeed yield slightly different results. To mitigate this, we used two techniques to reduce initialization dependence:
\begin{itemize}[wide, labelindent=0pt]
    \item We repeat each optimization with three different random initializations and select the one with the lowest loss.
    \item We add a regularization term $\|f_\theta(X)\|^2$ to the training loss, with a weight that is scheduled to decay exponentially from 0.5 at the beginning to negligible values around the middle of training. This keeps the encoder close to the decision boundary at the beginning of training, leading to a kind of annealing procedure.
\end{itemize}
We have found that these measures lead to consistent operational rate--distortion performance across experiments.

\subsection{Operational scheme}
\label{sec:operational_scheme}

The encoder $f_{\theta}$ maps a single source realization to $L$ channel inputs,
where $L$ is a small power of two. While $\mathtt{PolarSim}$ can simulate channels
of size $L$, its performance improves with increasing block length. Accordingly,
we apply $f_{\theta}$ to $N/L$ independent realizations of the source,
where~$N$ is a large power of two,
and apply $\mathtt{PolarSim}$ 
to the concatenated outputs, resulting in an effective block
length of~$N$. This is facilitated by
the favorable scaling complexity of $\mathtt{PolarSim}$. Our overall scheme 
thus resembles \emph{concatenated codes} in communications~\citep{Forney:PhD}.
It is also analogous to how arithmetic coding is deployed in practice, where quantized elements from multiple spatial locations in an image are concatenated to approach the entropy limit.
We found that performance is improved if latent bits that have a very small contribution to the KLD (identified by Monte Carlo estimating their KLD using source samples and thresholding it) are removed prior to concatenation.

The operational scheme is illustrated in Fig.~\ref{fig:polar_sim}.

\section{Experiments}
\label{sec:experiments}

\begin{figure}[t]
    \raggedleft
    \begin{tikzpicture}
    \begin{axis}[
        height=.27\textheight,
        width=.93\linewidth,
        scale only axis=true,
        xlabel={rate [bits]},
        ylabel={distortion [dB]},
        xmin=3,
        xmax=5.95,
        ymin=-32,
        ymax=-12,
        legend style={at={(1,1)}, anchor=north east},
        legend columns=2
    ]
        \addplot[color=cyan, mark=square, densely dotted, mark size=2pt] table[x=rate, y=distortion_dB] {datapoints/circle/ntc_f_nl2.dat};
        \addplot[color=cyan, mark=square*, mark size=2pt] table[x=op_rate, y=op_distortion_dB] {datapoints/circle/ntc_f_nl2.dat};
        \addplot[color=cyan!50!black, mark=square, densely dotted, mark size=2pt] table[x=rate, y=distortion_dB] {datapoints/circle/ntc_f_nl1.dat};
        \addplot[color=cyan!50!black, mark=square*, mark size=2pt] table[x=op_rate, y=op_distortion_dB] {datapoints/circle/ntc_f_nl1.dat};
        \addplot[color=magenta, mark=o, densely dotted] table[x=rate, y=distortion_dB] {datapoints/circle/sb_nl8.dat};
        \addplot[color=magenta, mark=*] table[x=op_rate, y=op_distortion_dB] {datapoints/circle/sb_nl8.dat};
        \addplot[color=magenta!50!white, mark=o, densely dotted] table[x=rate, y=distortion_dB] {datapoints/circle/sb_nl4.dat};
        \addplot[color=magenta!50!white, mark=*] table[x=op_rate, y=op_distortion_dB] {datapoints/circle/sb_nl4.dat};
        \addplot[color=black] table {datapoints/circle/e-d_bound.dat};
        \legend{
            ,
            NTC $L=2$,
            ,
            NTC $L=1$,
            ,
            SBC $L=8$,
            ,
            SBC $L=4$,
            one-shot E-D
        };
        \draw[-Triangle] (axis description cs:0.3,0.3) -- (axis description cs:0.2,0.2) node[at start, above left, sloped, inner xsep=0pt] {\footnotesize better};
    \end{axis}
    \end{tikzpicture}
  \caption{Rate--distortion performance on the circle. One-shot entropy--distortion (E-D) bound is due to~\citet{bhadane2022neural}.}
  \label{fig:circle}
\end{figure}

\begin{figure}[t]
    \raggedleft
    \begin{tikzpicture}
    \begin{axis}[
        height=.27\textheight,
        width=.93\linewidth,
        scale only axis=true,
        xlabel={rate [bits]},
        ylabel={distortion [dB]},
        xmin=2.5,
        xmax=8.2,
        ymin=-32,
        ymax=-16.5,
        ytick distance=3,
        legend style={at={(0.12,0.08)}, anchor=south west},
    ]
        \addplot[color=magenta, mark=o, densely dotted] table[x=rate, y=distortion_dB] {datapoints/ramp/sb_nl8.dat};
        \addplot[color=magenta, mark=*] table[x=op_rate, y=op_distortion_dB] {datapoints/ramp/sb_nl8.dat};
        \addplot[color=magenta!50!white, mark=o, densely dotted] table[x=rate, y=distortion_dB] {datapoints/ramp/sb_nl4.dat};
        \addplot[color=magenta!50!white, mark=*] table[x=op_rate, y=op_distortion_dB] {datapoints/ramp/sb_nl4.dat};
        \addplot[color=cyan, mark=square, densely dotted, mark size=2pt] table[x=rate, y=distortion_dB] {datapoints/ramp/ntc_f_nl1.dat};
        \addplot[color=cyan, mark=square*, mark size=2pt] table[x=op_rate, y=op_distortion_dB] {datapoints/ramp/ntc_f_nl1.dat};
        \addplot[color=black] table {datapoints/ramp/e-d_bound.dat};
        \legend{
            ,
            SBC $L=8$, 
            , 
            SBC $L=4$,
            ,
            NTC, 
            one-shot E-D
        };
        \draw[-Triangle] (axis description cs:0.23,0.64) -- (axis description cs:0.13,0.54) node[at start, above left, sloped, inner xsep=0pt] {\footnotesize better};
    \end{axis}
    \end{tikzpicture}
  \caption{Rate--distortion performance on the ramp. One-shot entropy--distortion (E-D) bound is due to~\citet{bhadane2022neural}.}
  \label{fig:ramp}
\end{figure}

We evaluate SBC across a suite of information-theoretic sources. A key advantage of these benchmarks is that we can compare to analytical bounds from the literature, allowing us to investigate how the methods perform with respect to the theoretical optimum. In our experiments, we set  the number of binary-output channels simulated per $\mathtt{PolarSim}$ block $N$ to be $N=2^{23}$ ($\approx 8.4 \times 10^6$), allowing SBC to achieve polarization gains. Since each source realization is mapped to a latent dimension of at most $L$ (subject to the pruning process, in Sec.~\ref{sec:operational_scheme}), the number of source realizations encoded by the $N$ bits is approximately $N/L$. The number of source samples processed by the encoder $f_\theta$ per call remains 1 for both NTC and SBC.

It is worth highlighting our use of a larger latent dimension~$L$ for SBC (4, 8, or 32 across experiments) compared to NTC ($L=1$ in most cases). This follows directly from the two schemes' capacity constraints: each SBC latent dimension corresponds to a single bit, capping the rate at $L$ bits per source realization, whereas each NTC dimension is a real number. NTC can therefore realize a given rate with fewer dimensions, though its reliance on continuous-valued latents carries a smoothness bias~\citep{bhadane2022neural, ozyilkan2024breaking} that SBC avoids.

For NTC results, we opt for the annealing procedure by \citet{agustsson2020universally}, which interpolates between soft and hard quantization to minimize the mismatch between training and operational losses, and we use the factorized entropy model in~\citep[Appendix 6.1]{balle2018variational} for $p_{\psi}$.
In Fig.~\ref{fig:circle}--\ref{fig:iid_uniform}, the dotted and solid lines refer to the training and operational performances, respectively, both for NTC and SBC. In all cases, we use mean squared error as distortion.

Both \textbf{circle} and \textbf{ramp} sources were introduced by~\citet{bhadane2022neural} to highlight failure modes of NTC related to \emph{smoothness bias} in the encoder and decoder transforms, respectively.
The circle is a 2D source with an intrinsic dimensionality of one, with $X = (\cos \theta, \sin \theta) \in \mathbb R^2$, and $\theta \sim \mathcal U(0, 2\pi)$. \citet{bhadane2022neural} showed that while a latent dimension of $L=1$ should be sufficient to compress this source, NTC struggles to resolve the branch cut needed to represent it in its latent space. With $L=2$, the issue can be worked around in this case, with one dimension serving as an indicator function and the other representing each half-circle (although this may not be possible for more complex sources with a circular topology). Fig.~\ref{fig:circle} shows that SBC is not subject to this smoothness issue, and also appears to obtain a shaping gain over the one-shot bound, as in the other examples to follow. Note that SBC is limited in another way---it can only operate up to a rate of $L$, as evident in the curve showing $L=4$.

The \textbf{ramp} is a process defined as $X_t = [(t + \beta) \pmod 1] - \frac{1}{2}$, with latent phase $\beta \sim \mathcal U(0, 1)$. In this case, the source exhibits a sharp jump at a random point between 0 and 1. Fig.~\ref{fig:ramp} shows that NTC achieves the one-shot entropy--distortion bound; however, at higher rates, it fails to recover the sharp discontinuities needed in the decoder. Again, SBC is not limited by this, but it is again by dimensionality: with $L=4$ and $L=8$, the representation is limited to rates of 4 and 8 bits, respectively. This is in contrast to NTC, where latent space dimensionality does not limit the rate.

\begin{figure}[t]
    \raggedleft
    \begin{tikzpicture}
    \begin{axis}[
        height=.27\textheight,
        width=.93\linewidth,
        scale only axis=true,
        xlabel={rate [bits]},
        ylabel={distortion [dB]},
        xmin=0.9,
        xmax=3.85,
        ymin=-25,
        ymax=-10.5,
        ytick distance=3,
        legend style={at={(1,1)}, anchor=north east},
        legend columns=4,
        legend transposed=true
    ]
        \addplot[color=black, densely dashed] table {
            0 0.41392685
            6. -35.70967263
        };
        \addplot[color=black] table {
            0 -10
            6. -46.12359948
        };
        \addplot[color=magenta, mark=o, densely dotted] table[x=rate, y=distortion_dB] {datapoints/wyner_ziv_isit/sb_nl32.dat};
        \addplot[color=magenta, mark=*] table[x=op_rate, y=op_distortion_dB] {datapoints/wyner_ziv_isit/sb_nl32.dat};
        \addplot[color=cyan, mark=square, densely dotted, mark size=2pt] table[x=rate, y=distortion_dB] {datapoints/wyner_ziv_isit/ntc_f_nl1.dat};
        \addplot[color=cyan, mark=square*, mark size=2pt] table[x=op_rate, y=op_distortion_dB] {datapoints/wyner_ziv_isit/ntc_f_nl1.dat};
        \addplot[color=purple, mark=diamond] table {
            0.99991536 -11.946896314620972
            1.5850284 -13.883405923843384
            1.9999739 -15.552754402160645
            2.3323302 -16.972007751464844
            2.6175106 -18.454054594039917
            2.8724792 -19.391801357269287
            3.1790128 -20.37123203277588
            3.5471265 -22.068610191345215
            3.6638668 -22.976665496826172
            3.800669 -23.74577045440674
        };
        \addplot[color=purple, mark=diamond*, only marks] table {
            1.992392 -15.610814094543457
        };
        \legend{
            p2p R-D,
            WZ R-D,
            ,
            SBC,
            ,
            NTC,
            \citet{Ozyilkan_JSAIT24},
            \citet{Ozyilkan_ISIT23}
        };
        \draw[-Triangle] (axis description cs:0.25,0.25) -- (axis description cs:0.15,0.15) node[at start, above left, sloped, inner xsep=0pt] {\footnotesize better};
    \end{axis}
    \end{tikzpicture}
  \caption{Rate--distortion performance on distributed compression of $X=Y+N$ with side information $Y \sim \mathcal{N}(0,1)$ and $N \sim \mathcal{N}(0,10^{-1})$ (NTC: $L=1$; SBC: $L=32$). The asymptotic rate--distortion bound (R-D) with side information, due to~\citet{Wyner:IT:76}, is denoted as \emph{WZ R-D}, and the asymptotic R-D bound without any side information (\emph{point-to-point}), hence worse, is referred to as \emph{p2p R-D}.}
  \label{fig:WZ_1}
\end{figure}

\textbf{Distributed compression} involves the compression of multiple correlated sources that are encoded independently but decoded jointly. Here, we focus on the asymmetric case, first characterized by \citet{Wyner:IT:76}, which captures the fundamental challenge of exploiting source correlations when side information $Y$ is available exclusively at the decoder. 
The achievability proof in \citet{Wyner:IT:76} for this setting posits a strategy known as ``binning,'' which involves assigning source sequences into bins (subsets) and transmitting only the bin index instead of the sequence index. While this bin index is inherently ambiguous, the ambiguity can be resolved at the decoder with the help of side information, thereby yielding rate reduction.

This problem has been extensively studied; for practical schemes, the seminal work in the pre-ML era is DISCUS \citep{DISCUS} which implements binning mechanisms by exploiting ideas from channel coding (i.e., coset codes). More recently, \citet{Ozyilkan_ISIT23, Ozyilkan_JSAIT24} show that VQ-like parameterized neural networks recover behavior akin to binning, whereas ECSQ-style NTC cannot. This is, again, because binning as a quantization strategy requires highly discontinuous encoder transforms~$f_\theta$~\citep[e.g.,][Fig. 2]{Ozyilkan_ISIT23} in the case of NTC. Figs.~\ref{fig:WZ_1} and~\ref{fig:WZ_2} reveal that SBC again outperforms both the classical fixed-rate method DISCUS (which, notably, is constructed specifically for this source) as well as the more recent learning-based work, which like SBC is data-driven and does not require closed-form knowledge of the source.

\begin{figure}[t]
    \raggedleft
    \begin{tikzpicture}
    \begin{axis}[
        height=.27\textheight,
        width=.93\linewidth,
        scale only axis=true,
        xlabel={rate [bits]},
        ylabel={distortion [dB]},
        xmin=0.8,
        xmax=3.8,
        ymin=-22.2,
        ymax=-10,
        ytick distance=3,
        legend style={at={(1,1)}, anchor=north east},
        legend columns=4,
        legend transposed=true
    ]
        \addplot[color=black, densely dashed] table {
            0. 0.0
            6. -36.12359947967774
        };
        \addplot[color=black] table {
            0. -10.41392685158225
            6. -46.53752633126
        };
        \addplot[color=magenta, mark=o, densely dotted] table[x=rate, y=distortion_dB] {datapoints/wyner_ziv_discus/sb_nl32.dat};
        \addplot[color=magenta, mark=*] table[x=op_rate, y=op_distortion_dB] {datapoints/wyner_ziv_discus/sb_nl32.dat};
        \addplot[color=cyan, mark=square, densely dotted, mark size=2pt] table[x=rate, y=distortion_dB] {datapoints/wyner_ziv_discus/ntc_f_nl1.dat};
        \addplot[color=cyan, mark=square*, mark size=2pt] table[x=op_rate, y=op_distortion_dB] {datapoints/wyner_ziv_discus/ntc_f_nl1.dat};
        \addplot[color=purple, mark=diamond] table {
            0.9853333 -12.374297380447388
            1.5840037 -14.302208423614502
            2.0034254 -15.889244079589844
        };
        \addplot[color=orange, mark=triangle*, mark size=3pt, only marks] table {
            1. -13.0
            2. -16.9361702128
        };
        \legend{
            p2p R-D,
            WZ R-D,
            ,
            SBC,
            ,
            NTC,
            \citet{Ozyilkan_JSAIT24},
            DISCUS
        };
        \draw[-Triangle] (axis description cs:0.2,0.2) -- (axis description cs:0.1,0.1) node[at start, above left, sloped, inner xsep=0pt] {\footnotesize better};
    \end{axis}
    \end{tikzpicture}
  \caption{Rate--distortion performance on distributed compression of $Y=X+N$ with $X \sim \mathcal{N}(0,1)$ and $N \sim \mathcal{N}(0,10^{-1})$, where $Y$ is side information (NTC: $L=1$; SBC: $L=32$). For DISCUS by~\citet{DISCUS}, we include data points obtained with trellis-based quantization and coset construction, available at $R\in\{1,2\}$ bits. The asymptotic rate--distortion bound (R-D) with side information, due to~\citet{Wyner:IT:76}, is denoted as \emph{WZ R-D}, and the asymptotic R-D bound without any side information (\emph{point-to-point}), hence worse, is referred to as \emph{p2p R-D}.}
  \label{fig:WZ_2}
\end{figure}

\begin{figure*}[t]
    \raggedleft
    \begin{tikzpicture}[
    spy using outlines={rectangle, 
        magnification=1.8, 
        width=70pt,
        height=80pt,
        connect spies, 
        draw=black,
        very thick
    }
]
    \begin{axis}[
        name=main_plot,
        height=.25\textheight,
        width=.93\columnwidth,
        scale only axis=true,
        xlabel={rate [bits]},
        ylabel={distortion [dB]},
        xmin=0.2,
        xmax=1.7,
        ymin=-11.01,
        ymax=-0.01,
        ytick distance=3,
    ]
        \addplot[color=black, densely dashed] table {
            0. 1.5329310421374203
            6. -34.59066843754028
        };
        \addplot[color=black] table {
            0. 0.
            6. -36.12359947967774
        };
        \addplot[color=violet, mark=diamond*] table {
            0.331294 -1.529672
            0.454065 -2.180042
            0.541518 -2.667848
            0.692450 -3.583420
            0.842392 -4.507248
            1.324018 -7.412714
            1.820652 -10.390082
        };
        \addplot[color=orange, mark=halfsquare right*, mark size=4pt, only marks, every mark/.append style={rotate=180}] table {
            1 -5.56
            2 -11.04
            3 -16.64
        };
        \addplot[color=cyan, mark=square, densely dotted, mark size=2pt] table[x=rate, y=distortion_dB] {datapoints/iid_gaussian/ntc_f_nl1.dat};
        \addplot[color=cyan, mark=square*, mark size=2pt] table[x=op_rate, y=op_distortion_dB] {datapoints/iid_gaussian/ntc_f_nl1.dat};
        \addplot[color=magenta, mark=o, densely dotted] table[x=rate, y=distortion_dB] {datapoints/iid_gaussian/sb_nl32.dat};
        \addplot[color=magenta, mark=*] table[x=op_rate, y=op_distortion_dB] {datapoints/iid_gaussian/sb_nl32.dat};
        \draw[-Triangle] (axis description cs:0.3,0.3) -- (axis description cs:0.2,0.2) node[at start, above left, sloped, inner xsep=0pt] {\footnotesize better};
        \coordinate (zoompoint) at (axis cs:0.94,-4.78);
        \coordinate (zoomwindow) at (axis description cs:1,1);
    \end{axis}
    \spy on (zoompoint) in node[fill=white, anchor=north east] at (zoomwindow);
    \end{tikzpicture}%
    \hfill%
    \begin{tikzpicture}[
    spy using outlines={rectangle, 
        magnification=1.5, 
        width=45pt,
        height=45pt,
        connect spies, 
        draw=black,
        very thick
    }
]
    \begin{axis}[
        height=.25\textheight,
        width=\columnwidth,
        scale only axis=true,
        xlabel={rate [bits]},
        xmin=1.7,
        xmax=3.15,
        ymin=-19,
        ymax=-8,
        ytick distance=3,
        legend style={at={(1,1)}, anchor=north east},
        legend columns=2,
        reverse legend=true
    ]
        \addplot[color=black, densely dashed] table {
            0. 1.5329310421374203
            6. -34.59066843754028
        };
        \addplot[color=black] table {
            0. 0.
            6. -36.12359947967774
        };
        \addplot[color=violet, mark=diamond*] table {
            0.331294 -1.529672
            0.454065 -2.180042
            0.541518 -2.667848
            0.692450 -3.583420
            0.842392 -4.507248
            1.324018 -7.412714
            1.820652 -10.390082
        };
        \addplot[color=orange, mark=halfsquare left*, mark size=4pt, only marks, every mark/.append style={rotate=180}] table {
            1 -5.56
            2 -11.04
            3 -16.64
        };
        \addplot[color=cyan, mark=square, densely dotted, mark size=2pt] table[x=rate, y=distortion_dB] {datapoints/iid_gaussian/ntc_f_nl1.dat};
        \addplot[color=cyan, mark=square*, mark size=2pt] table[x=op_rate, y=op_distortion_dB] {datapoints/iid_gaussian/ntc_f_nl1.dat};
        \addplot[color=magenta, mark=o, densely dotted] table[x=rate, y=distortion_dB] {datapoints/iid_gaussian/sb_nl32.dat};
        \addplot[color=magenta, mark=*] table[x=op_rate, y=op_distortion_dB] {datapoints/iid_gaussian/sb_nl32.dat};
        \legend{
            R-D + 1.53 dB,
            R-D,
            \citet{lei2025approaching},
            TCQ,
            NTC (tr.),
            NTC (op.)~~,
            SBC (tr.),
            SBC (op.)
        };
        \draw[-Triangle] (axis description cs:0.25,0.53) -- (axis description cs:0.15,0.43) node[at start, above left, sloped, inner xsep=0pt] {\footnotesize better};
        \coordinate (zoompoint) at (axis cs:3.04,-17.2);
        \coordinate (zoomwindow) at (axis description cs:0.32, 0.27);
    \end{axis}
    \spy on (zoompoint) in node [fill=white] at (zoomwindow);
    \end{tikzpicture}
  \caption{Rate--distortion performance on i.i.d.\ Gaussian source (NTC: $L=1$; SBC: $L=32$). The $1.53$ dB offset accounts for the space-filling loss that the entropy-constrained scalar quantizer (ECSQ) is subjected to in a high-rate regime~\citep{zamir_lattice}.
  TCQ points with 256 states are obtained from~\citet{taubman2013jpeg2000}. We include the best performing ECLQ scheme by~\citet{lei2025approaching}, which uses $\Lambda_{24}$ Leech lattice.}
  \label{fig:iid_Gaussian}
\end{figure*}

\begin{figure}[t]
    \raggedleft
    \begin{tikzpicture}
    \begin{axis}[
        height=.27\textheight,
        width=.93\linewidth,
        scale only axis=true,
        xlabel={rate [bits]},
        ylabel={distortion [dB]},
        xmin=0.0,
        xmax=3.4,
        ymin=-34.0,
        ymax=-10.1,
        ytick distance=5,
        legend style={at={(0.1,0.085)}, anchor=south west},
        legend columns=2,
    ]
        \addplot[color=magenta, mark=*] table[x=op_rate, y=op_distortion_dB] {datapoints/iid_uniform/sb_nl32.dat};
        \addplot[color=magenta, mark=o, densely dotted] table[x=rate, y=distortion_dB] {datapoints/iid_uniform/sb_nl32.dat};
        \addplot[color=cyan, mark=square*, mark size=2pt] table[x=op_rate, y=op_distortion_dB] {datapoints/iid_uniform/ntc_f_nl1.dat};
        \addplot[color=cyan, mark=square, densely dotted, mark size=2pt] table[x=rate, y=distortion_dB] {datapoints/iid_uniform/ntc_f_nl1.dat};
        \addplot[color=black, only marks, mark=asterisk, mark size=4.5pt] table {
            1. -17.5818124605
            2. -24.0018124605
            3. -30.2118124605
        };
        \addplot[color=orange, mark=halfsquare right*, mark size=3pt, only marks, every mark/.append style={rotate=180}] table {
            1 -17.3718124605
            2 -23.7918124605
            3 -30.0218124605
        };
        \legend{
            SBC (op.),
            SBC (tr.),
            NTC (op.),
            NTC (tr.),
            R-D,
            TCQ
        };
        \draw[-Triangle] (axis description cs:0.25,0.6) -- (axis description cs:0.15,0.5) node[at start, above left, sloped, inner xsep=0pt] {\footnotesize better};
    \end{axis}
    \end{tikzpicture}
  \caption{Rate--distortion performance on i.i.d.\ uniform source (NTC: $L=1$; SBC: $L=32$). Rate--distortion (R-D) points at $R=\{1,2,3\}$ bits and TCQ points with 256 states are obtained from~\citet{taubman2013jpeg2000}.}
  \label{fig:iid_uniform}
\end{figure}

The \textbf{i.i.d.\ Gaussian} source, where $X \sim \mathcal{N}(0, 1)$, is a well-studied setup in quantization theory with an optimal rate--distortion tradeoff given by $D(R) = 2^{-2R}$. Despite its simplicity, achieving performance close to this asymptotic bound remains a significant challenge. Trellis Coded Quantization (TCQ)~\citep{Marcellin_TCQ, taubman2013jpeg2000} represents a low-complexity approach to implementing VQ, and has been the de facto standard for decades \citep{Forney, TCQ_DCC}. \citet{lei2025approaching} propose replacing the rectangular quantization grid ($\mathbb{Z}^L$) in NTC with another form of VQ: higher-dimensional lattices such as the $\Lambda_{24}$ Leech lattice~\citep{leech1967notes, conway1999sphere}.
SBC outperforms the state-of-the-art (TCQ) as well as the recent lattice-based solution (Fig.~\ref{fig:iid_Gaussian}). Remarkably, it does so without explicit vector quantization. On the other hand, NTC is limited by its use of a scalar quantizer and hence performs as expected for an ECSQ.

The \textbf{i.i.d.\ uniform} source with $X \sim \mathcal U(-\tfrac 1 2, \tfrac 1 2)$ is another canonical benchmark in the quantization literature. 
Fig.~\ref{fig:iid_uniform}
reveals a specific failure mode in NTC: the smoothness bias of the entropy model. While the encoder transform $f_\theta$ only needs to implement the identity transformation, $p_\psi$ as a continuous density function struggles to model the sharp boundaries of the uniform distribution, leading to model mismatch, and in turn to a suboptimal rate. SBC reproduces the R-D performance of TCQ, which is consistent with the known optimal points. In contrast to the fixed-rate method TCQ, SBC is also able to obtain intermediate points.

\section{Discussion}

SBC is a novel neural compression method which avoids hard quantization, and hence the disconnect between the training and operational R-D performance encountered in NTC. It achieves state-of-the-art compression performance on several information-theoretic sources.
The \emph{circle}, \emph{ramp}, and \emph{uniform} sources serve to demonstrate an inherent benefit over NTC, in terms of discontinuities in the encoder, the decoder, and the prior, respectively. With \emph{binning} as a particularly challenging form of discontinuity, the distributed compression problem corroborates these findings.
On top of this, SBC obtains shaping gains on all the reported sources---matching the state-of-the-art on the uniform source and setting a new one on the i.i.d.\ Gaussian source.

Due to SBC's rate limitation to $L$ bits per source sample, where $L$ is the number of latent dimensions (Sec.~\ref{sec:experiments}), resource requirements for training SBC on higher-dimensional sources grow faster than for NTC, similar to the so-called ``curse of dimensionality'' observed in fitting vector quantizers. This currently limits our experiments to low-dimensional sources. However, we believe that in particular for the kind of low-rate applications that NTC is now increasingly being used for, such as point-cloud attributes, SBC could be a strong alternative, due to its better performance and weaker assumptions on source characteristics. Still, we plan to explore training SBC on high-dimensional sources such as images in future work. We also aim to explore how the efficiency of $\mathtt{PolarSim}$ can be increased at moderate block lengths by leveraging ideas from polar coding \citep[e.g.,][]{Tal:List}, so that the large $N$ we use in our experiments can be reduced without incurring a rate penalty compared to the training rate.

A related training-time observation is that SBC exhibits mild initialization dependence, which we mitigate using the techniques in Sec.~\ref{sec:neural_compression_w_SBC}. We have not identified its precise cause, but the phenomenon is not unexpected: latent-variable models with discrete representations---VQ-VAE~\citep{van2017neural}, Gumbel-Softmax / Concrete relaxations~\citep{jang2017categorical, maddison2017concrete}, and even classical Lloyd--Max vector quantizers~\citep{lloyd1982least}---all exhibit similar behavior. NTC, by contrast, appears to escape this, plausibly because its training loss acts as a continuous proxy for an inherently discrete problem. A natural direction for future work is to develop an analogous coding scheme for \emph{continuous} latents, possibly also based on polarization. Beyond broadening SBC's applicability, this would also let us test whether the initialization dependence is genuinely a consequence of the discrete latent space.

An intriguing aspect of SBC is that it obtains significant gains without using sophisticated probabilistic models. In practice, \emph{hyperprior} \citep{balle2018variational} or autoregressive entropy models \citep{minnen2018joint} require computing probabilities for arithmetic coding dynamically. This can be a significant implementational burden, as arithmetic coding requires bit-exact matching of probabilities on encoder and decoder sides, which can lead to catastrophic decoding errors when different floating-point hardware is involved~\citep{balle-2019-integer-networks, Shi_quantized, Pang_reproducibility}. With its simple factorized Bernoulli model, SBC appears to delegate the ``heavy lifting'' in terms of probabilistic modeling to the channel simulation part. If this holds true for complicated sources that cannot be sufficiently decorrelated using the encoder transform alone---another topic of future investigation---then SBC might have significant practical benefits in terms of implementational simplicity.

\section*{Impact Statement}

This paper explores theoretical and empirical ideas to advance the field of neural data compression. The broader societal consequences of advancing these learning-based compression methods are indirect and difficult to predict, none of which we feel must be specifically highlighted here.

\section*{Acknowledgment}
The AI language models ChatGPT~\citep{openai2026chatgpt} and Claude~\citep{anthropic2026claude} were used to assist with editing the manuscript and debugging the source code. Certain arguments in the proof of Theorem B.1 were also explored in dialogue with ChatGPT~\citep{openai2026chatgpt}.

\bibliography{main_paper}
\bibliographystyle{icml2026}

\newpage
\appendix
\onecolumn
\section{Randomized polar encoder}
\label{section:RPTDesc}

\begin{algorithm}[H]
\caption{\RPT}
\label{alg:RandomPolarEnc}
\begin{algorithmic}[1]
\REQUIRE Block length $N \in 2^{\mathbb{N}}$
\REQUIRE Bit sequence $z^N \in \{0,1\}^N$
\REQUIRE Permutation randomness $\Pi \in \mathcal{S}$
\REQUIRE Polarization level $n^*$
\ENSURE Output $u^N \in \{0,1\}^N$

\STATE \textbf{procedure} \textsc{TransformStep}$(z^k, \Pi, m)$
\IF{$m = n^*$}
  \STATE \textbf{return} $z^k$
\ENDIF
\STATE $(\Pi_0, \Pi_1, \Pi_2) \leftarrow \textsc{Split}(\Pi)$
\STATE $\tilde{y}^k \leftarrow \textsc{RandomPermute}(z^k,\Pi_0)$
\STATE $\tilde{y}_{\mathrm{even}}^{k/2} \leftarrow (\tilde{y}_{2i})_{i=1}^{k/2}$
\STATE $\tilde{y}_{\mathrm{odd}}^{k/2} \leftarrow (\tilde{y}_{2i-1})_{i=1}^{k/2}$
\STATE $u^{k/2} \leftarrow \textsc{TransformStep}(\tilde{y}_{\mathrm{even}}^{k/2} \oplus \tilde{y}_{\mathrm{odd}}^{k/2}, \Pi_1, m + 1)$
\STATE ${\tilde{u}}^{k/2} \leftarrow \textsc{TransformStep}(\tilde{y}_{\mathrm{even}}^{k/2}, \Pi_2, m + 1)$
\STATE \textbf{return} $(u^{k/2}, {\tilde{u}}^{k/2})$
\STATE \textbf{end procedure}

\STATE \textbf{return} \textsc{TransformStep}$(z^N, \Pi, 0)$
\end{algorithmic}
\end{algorithm}

Here, the function $\textsc{RandomPermute}(\cdot,\Pi)$ applies a uniformly random permutation to the input sequence, with the randomness contingent on the seed $\Pi$. The seed-splitting routine $\textsc{Split}(\cdot)$ produces independent sub-seeds, ensuring that each permutation is generated independently of the others. The parameter $n^* \in \{0,\ldots,n\}$ indicates the number of recursive levels we use. There is a technical condition that arises in Theorem~\ref{theorem:GeneralPolarizationTheorem} which necessitates the inclusion of the $n^*$ parameter. In our experiments, we fix $n^* = n$.

Random permutations are applied to symmetrize the subchannels at every recursion level, facilitating the analysis of the scheme. In practice, we obtain satisfactory empirical results even without these permutations.

\section{Nonstationary source polarization theorem}
\begin{theorem} \label{theorem:GeneralPolarizationTheorem}
Let $N=2^n$, and let $\Pi$ be an independent random permutation, drawn from the
common randomness, that determines the random permutations used by \RPT. Then, for every $\delta \in (0,1)$ and $n$, there
exists $n^* \in \{1,\ldots,n\}$ such that, with
\begin{equation}
    U^N = \RPT(Z^N, \Pi, n^*),
\end{equation}
we have
\begin{align}
    \lim_{n \to \infty}
    \frac{
    \left|\left\{i \in \{1,\ldots,2^n\} :
    H(U_i \mid U^{i-1}, {\Rep}^N, \Pi) \in (\delta,1-\delta)
    \right\}\right|}
    {2^n}
    = 0 .
\end{align}
\end{theorem}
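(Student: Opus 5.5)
The plan is to follow Arıkan's martingale argument, but to replace his almost-sure convergence (which needs stationarity) with a potential-function and pigeonhole argument. Stopping at a well-chosen depth $n^*$ is exactly what lets this work without stationarity.

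\textbf{Step 1: every intermediate sequence is a memoryless source with side information.} Fix a realization of $\Pi$. I would show by induction on the recursion level $m$ that each sequence $w^k$ passed to $\textsc{TransformStep}$ at level $m$ has coordinates $(W_j, S_j)$, where $S_j$ is side information, that are mutually independent across $j$ (though not identically distributed). At level $0$ we take $W_j=Z_j$ and $S_j={\Rep}_j$. After the permutation, a pair $(\tilde y_{2i-1},\tilde y_{2i})$ produces two children:
\begin{itemize}
\item the ``minus'' coordinate $\tilde y_{2i-1}\oplus\tilde y_{2i}$, with side information $(S_{2i-1},S_{2i})$;
\item the ``plus'' coordinate $\tilde y_{2i}$, with side information $(S_{2i-1},S_{2i},\tilde y_{2i-1}\oplus\tilde y_{2i})$.
\end{itemize}
Since distinct pairs are independent, both child sequences are again independent across coordinates. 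The chain rule then identifies each $H(U_i\mid U^{i-1},{\Rep}^N,\Pi=\pi)$ with the conditional entropy $H(W\mid S)$ of the corresponding level-$n^*$ coordinate. Averaging over $\pi$ comes at the end.

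\textbf{Step 2: a potential function and the per-pair gain.} Let $\Phi_m$ be the mean of $H(W\mid S)^2$ over all $N$ coordinates at level $m$, averaged over $\Pi$. Write $H_1,H_2$ for the two parent entropies and $H^-,H^+$ for the children's. Entropy is conserved, $H^-+H^+=H_1+H_2$, and $H^-\ge\max(H_1,H_2)$. By the conditional Mrs.\ Gerber's lemma,
\[
H^-\ \ge\ h\bigl(h^{-1}(H_1)\ast h^{-1}(H_2)\bigr),
\]
where $\ast$ denotes binary convolution. If both $H_1,H_2\in(\delta,1-\delta)$, this gives $H^-\ge\max(H_1,H_2)+\kappa(\delta)$ for some $\kappa(\delta)>0$. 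Consequently
\[
(H^-)^2+(H^+)^2\ \ge\ H_1^2+H_2^2+\kappa(\delta)^2/2,
\]
while for every other pair the sum of squares does not decrease. Hence $\Phi_m$ is nondecreasing, lies in $[0,1]$, and satisfies
\[
\Phi_{m+1}-\Phi_m\ \ge\ \frac{\kappa(\delta)^2}{4}\,\beta_m,
\]
where $\beta_m$ is the expected fraction of pairs formed at level $m$ in which both members lie in $(\delta,1-\delta)$.

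\textbf{Step 3: the random permutation converts ``many bad nodes'' into ``many bad pairs'' (main obstacle).} Let $\alpha_m$ be the expected fraction of coordinates at level $m$ that lie in $(\delta,1-\delta)$. A block at level $m$ has size $N_b=2^{n-m}$. Within a block containing a fraction $a_b$ of bad coordinates, a uniform random pairing makes a given bad coordinate's partner bad with probability $(a_bN_b-1)/(N_b-1)\ge a_b-1/(N_b-1)$. Averaging over blocks and applying Jensen's inequality (the bound is convex in $a_b$) gives
\[
\beta_m\ \ge\ \alpha_m^2-O\bigl(2^{m-n}\bigr).
\]
This is where the permutations are indispensable: without them, an adversarial nonstationary profile could pair every bad node with a polarized partner, and the gain would vanish. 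The delicate points are handling blocks of unequal badness and keeping the small-block error term under control. To make that error term negligible I would restrict attention to $m\le n/2$, so that $N_b\ge 2^{n/2}$.

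\textbf{Step 4: choosing $n^*$ by pigeonhole.} The increments of $\Phi$ over levels $m=1,\ldots,\lfloor n/2\rfloor$ sum to at most $1$, so some level $n^*$ in this range has $\Phi_{n^*+1}-\Phi_{n^*}\le 2/\lfloor n/2\rfloor$. Combining with Steps 2 and 3,
\[
\alpha_{n^*}^2\ \le\ \frac{8}{\kappa(\delta)^2\lfloor n/2\rfloor}+O\bigl(2^{-n/2}\bigr)\ \longrightarrow\ 0 .
\]
Stopping the transform at this $n^*$, the outputs $U_i$ are exactly the level-$n^*$ coordinates, and $\alpha_{n^*}$ is exactly the fraction in the statement of Theorem~\ref{theorem:GeneralPolarizationTheorem}. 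Note that I would read $\alpha_m$ using entropies conditioned on $\Pi$, i.e.\ $H(U_i\mid U^{i-1},{\Rep}^N,\Pi)=\mathbb E_\pi[H(\cdot\mid\Pi=\pi)]$, so the count in the theorem needs one extra step. By Markov's inequality applied to $\min(H,1-H)$, a vanishing expected fraction of per-realization bad coordinates also forces the averaged entropies to lie outside $(\delta',1-\delta')$ for all but a vanishing fraction of $i$. Applying the whole argument with a slightly smaller $\delta$ then absorbs this loss.
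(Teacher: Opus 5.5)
Your Steps 1--4 are sound and follow the same skeleton as the paper's proof. Both arguments use a bounded monotone potential, a uniform gain on pairs whose members are both unpolarized, and the random-pairing bound $\theta(\theta N_m-1)/(N_m-1)\ge\theta^2-\theta/(N_m-1)$ followed by Jensen. Both then choose $n^*$ by pigeonhole over levels.

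The paper runs this on Bhattacharyya parameters and converts to entropies only at the end. It uses $\bhat(\hat W)=\bhat(W_1)\bhat(W_2)$, $\bhat(\check W)+\bhat(\hat W)\le\bhat(W_1)+\bhat(W_2)$, the concave $\phi(x)=x^\gamma$, and Karamata. You work directly with conditional entropies, via conservation, $H^-\ge\max(H_1,H_2)$, the convex $x^2$, and the conditional Mrs.\ Gerber lemma; this is equally valid.

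Your version has one real advantage: $\Phi$ does not involve $\delta$. So the level in $\{1,\ldots,\lfloor n/2\rfloor\}$ with the smallest increment works for every $\delta$ at once. Adding the analogous potential without $V^N$ would also give a common $n^*$ for both families, which is what the proof of Theorem~\ref{theorem:ChannelSimGeneral} actually needs. The paper's $n^*$ instead minimizes the $\delta$-dependent $\overline\theta_{j,n}$. Restricting to $m\le n/2$, rather than summing the error terms $\sum_k 1/(2^k-1)=O(1)$ as the paper does, makes no difference.

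The gap is your final step, which passes from per-realization to $\Pi$-averaged entropies. Write $H_i(\pi)=H(U_i\mid U^{i-1},V^N,\Pi=\pi)$. Steps 1--4 show that $\mathbb{E}_\pi$ of the fraction of indices with $H_i(\pi)\in(\delta,1-\delta)$ vanishes. The theorem, however, is about $H(U_i\mid U^{i-1},V^N,\Pi)=\mathbb{E}_\pi[H_i(\pi)]$.

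Markov does make $\mathbb{E}_\pi\min(H_i,1-H_i)$ small for most $i$. But $x\mapsto\min(x,1-x)$ is concave, so Jensen gives $\min(\mathbb{E}_\pi H_i,1-\mathbb{E}_\pi H_i)\ge\mathbb{E}_\pi\min(H_i,1-H_i)$. That is the wrong direction: a mixture over $\pi$ of polarized values need not be polarized.

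Here is a concrete counterexample. Let the $V_j$ be uniform bits, set $Z_j=V_j$ for $j\le N/2$, and let $Z_j$ be an independent uniform bit for $j>N/2$.
\begin{itemize}
\item Every per-realization entropy is $0$ or $1$ at every level, so $\Phi_m\equiv\frac12$ by conservation and every increment is zero.
\item Your selection rule may therefore return $n^*=1$. The paper's rule may too, since $\overline\theta_{j,n}=0$ for all $j$.
\item At $n^*=1$, a minus output is deterministic only if both partners come from the first half. This happens with probability $\frac{N-2}{4(N-1)}$.
\item Hence $H(U_i\mid U^{i-1},V^N,\Pi)\to\frac34$ for $i\le N/2$ and $\to\frac14$ for $i>N/2$.
\end{itemize}
So \emph{every} index is unpolarized whenever $\delta<\frac14$, and taking a smaller $\delta$ does not help.

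You should drop that remark. Without it, you have proved exactly the expected-fraction, per-realization statement that the paper's proof also establishes. The paper's $\overline\theta_{n^*,n}$ is an average over $\Pi$ of per-realization counts, and the paper never addresses the averaged reading. Getting the literal $\Pi$-averaged version would need an additional idea, such as a choice of $n^*$ that forces $H_i(\pi)$ to concentrate in $\pi$; a Markov argument cannot do it.
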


\subsection{Proof of Theorem~\ref{theorem:GeneralPolarizationTheorem}}
\label{section:GeneralizedPolarizationProof}

We will begin by defining a measure of channel quality called the \emph{Bhattacharyya parameter}.

\begin{definition}[cf.~\cite{arikan2010source}] \label{def:Bhattacharyya}
    Consider a pair of random variables $T,Z \sim P_{TZ}$ taking values in any $\mathcal{T} \times \{0,1\}$. We then define
    \begin{align}
        \bhat(P_{Z\mid T}) = 2E_T \left[ \sqrt{P_{Z\mid T}(0\mid T)P_{Z\mid T}(1\mid T)}\right].
    \end{align}
\end{definition}

The Bhattacharyya parameter contracts under a single step of the~\RPT.

\begin{definition}[Pairwise polarizing transform] \label{def:PairwisePolarTransform}
Let $(T_1,Z_1)$ and $(T_2,Z_2)$ be independent random pairs with joint
distributions $P_{T_1Z_1}$ and $P_{T_2Z_2}$, respectively, with $Z_1$ and $Z_2$ being binary valued. 

Define the channels
\begin{align}
    W_1(z_1\mid t_1) &:= P_{Z_1\mid T_1}(z_1\mid t_1), \\
    W_2(z_2\mid t_2) &:= P_{Z_2\mid T_2}(z_2\mid t_2).
\end{align}
From $(W_1,W_2)$, define the transformed channels
\begin{align}
    \check W(z\mid t_1,t_2) 
        &= \Pr(Z_1 \oplus Z_2 = y \mid T_1=t_1, T_2=t_2), \\
    \hat W(z_2\mid t_1,t_2,z) 
        &= \Pr(Z_2 = z_2 \mid T_1=t_1, T_2=t_2, Z_1 \oplus Z_2 = z).
\end{align}
We write
\begin{align}
    (W_1, W_2) \mapsto (\check W, \hat W)
\end{align}
to denote this channel transformation, and refer to it as the \emph{pairwise polarizing transform}.
\end{definition}

\begin{lemma}[cf. Proposition 1, \cite{arikan2010source}] \label{lemma:PairwiseConstract}
    Under a pairwise polarizing transform, we have
    \begin{align}
        \bhat(\hat{W}) &= \bhat(W_1) \cdot \bhat(W_2), \text{ and}\\
        \bhat(\check{W}) + \bhat(\hat{W}) &\le \bhat(W_1) + \bhat(W_2). \label{eq:Contraction}
    \end{align}
\end{lemma}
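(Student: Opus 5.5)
Both claims come from writing $\bhat$ explicitly in terms of the per-realization posteriors. I will fix realizations $t_1,t_2$ and set $a=W_1(0\mid t_1)$, $\bar a = 1-a$, $b = W_2(0\mid t_2)$, $\bar b=1-b$. Since $(T_1,Z_1)$ and $(T_2,Z_2)$ are independent, the joint posterior of $(Z_1,Z_2)$ given $(t_1,t_2)$ is the product $W_1\otimes W_2$. Each transformed channel's Bhattacharyya parameter is an expectation over its conditioning variables, so it suffices to compute the integrand pointwise in $(t_1,t_2)$ and then average. The independence of $T_1$ and $T_2$ makes products of expectations factor.

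\emph{Equality for $\hat W$.} The conditioning variable of $\hat W$ is $(t_1,t_2,z)$ with $z=Z_1\oplus Z_2$. For $z=0$, the event $Z_1\oplus Z_2=0$ has probability $ab+\bar a\bar b$, and $\hat W(0\mid t_1,t_2,0)=ab/(ab+\bar a\bar b)$, $\hat W(1\mid t_1,t_2,0)=\bar a\bar b/(ab+\bar a \bar b)$. So the contribution to $2E[\sqrt{\hat W(0|\cdot)\hat W(1|\cdot)}]$ is $(ab+\bar a\bar b)\cdot 2\sqrt{ab\bar a\bar b}/(ab+\bar a\bar b)=2\sqrt{a\bar a b\bar b}$. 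By the same computation, the $z=1$ branch contributes $2\sqrt{a\bar a b\bar b}$. The pointwise total is therefore $4\sqrt{a\bar a}\sqrt{b\bar b}=(2\sqrt{a\bar a})(2\sqrt{b\bar b})$. Taking expectations over the independent $T_1,T_2$ factors this into $\bhat(W_1)\bhat(W_2)$.

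\emph{Inequality for $\check W$.} Here $\check W(0\mid t_1,t_2)=ab+\bar a\bar b$ and $\check W(1\mid t_1,t_2)=a\bar b+\bar a b$. Put $x=2\sqrt{a\bar a}$ and $y=2\sqrt{b\bar b}$, both in $[0,1]$. Then
\[
4(ab+\bar a\bar b)(a\bar b+\bar a b)=4\left[a\bar a(b^2+\bar b^2)+b\bar b(a^2+\bar a^2)\right].
\]
Using $b^2+\bar b^2=1-2b\bar b$, this equals $x^2+y^2-x^2y^2$. The key scalar inequality is
\[
\sqrt{x^2+y^2-x^2y^2}\le x+y-xy ,
\]
which follows from $(x+y-xy)^2-(x^2+y^2-x^2y^2)=2xy(1-x)(1-y)\ge 0$. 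Taking expectations over the independent $T_1,T_2$ gives $\bhat(\check W)\le \bhat(W_1)+\bhat(W_2)-\bhat(W_1)\bhat(W_2)$. Combining with the equality for $\hat W$ yields \eqref{eq:Contraction}.

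The only real obstacle is this algebraic identity for $\check W$: expressing its pointwise Bhattacharyya integrand as $\sqrt{x^2+y^2-x^2y^2}$ and then finding the bound $x+y-xy$. That bound is exactly what lets the product term cancel against $\bhat(\hat W)$. Everything else is routine averaging over $(T_1,T_2)$. This route also gives the slightly stronger statement $\bhat(\check W)\le \bhat(W_1)+\bhat(W_2)-\bhat(W_1)\bhat(W_2)$, matching Ar{\i}kan's stationary case. The argument never uses that the two channels are identically distributed, which is why it carries over to the non-stationary setting.
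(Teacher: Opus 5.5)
Your proof is correct and takes essentially the same route as the paper: pointwise computation of the Bhattacharyya integrands, factorization through the independence of $T_1,T_2$, and the scalar bound $\sqrt{x^2+y^2-x^2y^2}\le x+y-xy$. That bound is the paper's inequality $p^2+q^2-4p^2q^2\le(p+q-2pq)^2$ with $x=2p$, $y=2q$. One small correction: $\bhat(\check W)\le\bhat(W_1)+\bhat(W_2)-\bhat(W_1)\bhat(W_2)$ is not stronger than the lemma, since it is equivalent to the sum inequality once the equality for $\hat W$ holds, and it is exactly the form the paper itself uses in the next lemma.
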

\begin{proof}
    For $z_2 \in \{0,1\}$, we have
    \begin{align}
        &\Pr(Z_2 = z_2\mid T^2 = t^2, Z_1 \oplus Z_2 = z)\\
        &= \frac{\Pr(Z_2 = z_2, Z_1 \oplus Z_2 = z\mid T^2 = t^2)}{\Pr(Z_1 \oplus Z_2 = z\mid T^2 = t^2)}\\
        &= \frac{\Pr(Z_2 = z_2, Z_1 = z \oplus z_2\mid T^2 = t^2)}{\Pr(Z_1 \oplus Z_2 = z\mid T^2 = t^2)}\\
        &= \frac{\Pr(Z_2 = z_2\mid T_2 = t_2)\Pr(Z_1 = z \oplus z_2\mid T_1 = t_1)}{\Pr(Z_1 \oplus Z_2 = z\mid T^2 = t^2)}.
    \end{align}
    Therefore,
    \begin{align}
        &\bhat(\hat W)\\
        &= 2E_{T^2,Z}\left[ \sqrt{\frac{\Pr(Z_2 = 0\mid T_2)\Pr(Z_1 = Z \oplus 0\mid T_1)\Pr(Z_2 = 1\mid T_2)\Pr(Z_1 = Z \oplus 1\mid T_1)}{(\Pr(Z_1 \oplus Z_2 = Z\mid T^2))^2}}\right]\\
        &= 4E_{T^2}\left[ \sqrt{\Pr(Z_2 = 0\mid T_2)\Pr(Z_1 = 0\mid T_1)\Pr(Z_2 = 1\mid T_2)\Pr(Z_1 = 1\mid T_1)}\right]\\
        &= 2E_{T_1}\left[ \sqrt{\Pr(Z_1 = 1\mid T_1) \Pr(Z_1 = 0\mid T_1) }\right] \cdot 2E_{T_2}\left[ \sqrt{\Pr(Z_2 = 1\mid T_2) \Pr(Z_2 = 0\mid T_2) }\right]\\
        &= \bhat(W_1) \cdot \bhat(W_2). \label{eq:WorseChannel}
    \end{align}
    Similarly,
    \begin{align}
        &\Pr(Z_1 \oplus Z_2 = z\mid T^2 = {\rep}^2)\\
        &= \sum\limits_{z_2 \in \{0,1\}} \Pr(Z_1 \oplus Z_2 = z, Z_2 = z_2\mid T^2 = {\rep}^2)\\
        &= \sum\limits_{z_2 \in \{0,1\}} \Pr(Z_1 = z \oplus z_2\mid T_1)\Pr(Z_2 = z_2\mid T_2 = T_2).
    \end{align}
    Therefore
    \begin{align}
        &\bhat(\check W) \\
        &= 2E_{T^2}\Big[ \sqrt{\Pr(Z_1=0\mid T_1)\Pr(Z_2 = 0\mid T_2) + \Pr(Z_1=1\mid T_1)\Pr(Z_2 = 1\mid T_2)}\\
        &=\phantom{2E_{T^2}}\cdot \sqrt{\Pr(Z_1=1\mid T_1)\Pr(Z_2 = 0\mid T_2) + \Pr(Z_1=0\mid T_1)\Pr(Z_2 = 1\mid T_2)}\Big]. \label{eq:ContractSubs}
    \end{align}
    Now, for $\alpha, \beta \in (0,1)$, consider
    \begin{align}
        &\sqrt{\left(\alpha\beta + (1-\alpha)(1-\beta)\right) \cdot \left(\alpha(1 - \beta) + \beta(1 - \alpha)\right)}\\
        =& \sqrt{\left(\alpha\beta + (1-\alpha)(1-\beta)\right) \cdot \left(\alpha(1 - \beta) + \beta(1 - \alpha)\right)}\\
        =& \sqrt{\alpha(1 - \alpha) + \beta(1 - \beta) - 4\alpha\beta(1 - \alpha)(1 - \beta)}
    \end{align}
    Set $p = \sqrt{\alpha(1-\alpha)}$ and $q = \sqrt{\beta(1-\beta)}$. Then, we have
    \begin{align}
        &\alpha(1 - \alpha) + \beta(1 - \beta) - 4\alpha\beta(1 - \alpha)(1 - \beta)\\
        &- \left(\sqrt{\alpha(1 - \alpha)} + \sqrt{\beta(1 - \beta)} - 2\sqrt{\alpha(1 - \alpha)\beta(1 - \beta)} \right)^2\\
        &=p^2 + q^2 - 4p^2q^2 - (p + q - 2pq)^2 \\
        &= 4pq^2 + 4qp^2 - 8p^2q^2 - 2qp\\
        &= -2pq(2p-1)(2q-1) \le 0. \label{eq:PQResult}
    \end{align}
    Here (\ref{eq:PQResult}) follows from the fact that $p\le \frac{1}{2}$ and $q \le \frac{1}{2}$. Substituting this result in (\ref{eq:ContractSubs}), and combining with (\ref{eq:WorseChannel}), we obtain the required result.
\end{proof}
Our goal is to use this contraction to show that the Bhattacharyya parameters of the subchannels are polarized: $\bhat(U_i\mid U^{i-1}, {\Rep}^N, \Pi) \approx 0$ or $\bhat(U_i\mid U^{i-1}, {\Rep}^N, \Pi) \approx 1$ for most indices. Our proof strategy will be to construct a Lyapunov-like function of the subchannel Bhattacharyya parameters that decreases strictly through each pairwise transform step for non-polarized channels. 

Fix $\gamma \in (0,1)$ and define $\phi(x) = x^\gamma$, for $x \in [0,1]$. 

\begin{lemma} \label{lemma:PairwiseLyapunov}
    Consider the pairwise polarizing transform (see~Definition~\ref{def:PairwisePolarTransform}) applied to the pair $(W_1, W_2)$. Then, if $\bhat(W_1), \bhat(W_2) \in (\delta, 1 - \delta)$, there exists a positive constant $C(\delta, \gamma)$ s.t.
    \begin{align}
        \phi(\bhat(\hat W)) + \phi(\bhat(\check W)) &\le \phi(\bhat(W_1)) + \phi(\bhat(W_2)) - C(\delta, \gamma).
    \end{align}
\end{lemma}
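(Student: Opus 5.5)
The plan is to reduce the claim to a strict-concavity (majorization) inequality for the scalar function $\phi(x)=x^\gamma$ on a compact set. Write $a=\bhat(W_1)$ and $b=\bhat(W_2)$, both in $[\delta,1-\delta]$.

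By Lemma~\ref{lemma:PairwiseConstract} we have $\bhat(\hat W)=ab$. Rearranging \eqref{eq:Contraction} gives $\bhat(\check W)\le a+b-ab = 1-(1-a)(1-b)\le 1$. Since $\phi$ is increasing on $[0,1]$, it follows that
\begin{align}
\phi(\bhat(\hat W))+\phi(\bhat(\check W)) \le (ab)^\gamma + (a+b-ab)^\gamma .
\end{align}
It therefore suffices to show that
\begin{align}
g(a,b) := a^\gamma+b^\gamma-(ab)^\gamma-(a+b-ab)^\gamma
\end{align}
is bounded below by a positive constant on $[\delta,1-\delta]^2$. We then take $C(\delta,\gamma)$ to be that minimum, or any positive number below it.

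For the key step, observe that the pairs $(ab,\,a+b-ab)$ and $(a,b)$ have the same sum. Moreover, because $a,b\in(0,1)$,
\begin{align}
ab<\min(a,b)\le\max(a,b)<a+b-ab .
\end{align}
The strict inequalities hold since $a-ab=a(1-b)>0$, and symmetrically for $b$. Hence both $a$ and $b$ lie strictly inside the interval $[x,y]$ with $x=ab$ and $y=a+b-ab$. So we can write $a=\lambda x+(1-\lambda)y$ and $b=(1-\lambda)x+\lambda y$ for some $\lambda\in(0,1)$, where the second identity follows from the equal sums.

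Since $\phi$ is strictly concave on $[0,1]$ (because $\gamma\in(0,1)$), we get $\phi(a)>\lambda\phi(x)+(1-\lambda)\phi(y)$ and $\phi(b)>(1-\lambda)\phi(x)+\lambda\phi(y)$. Adding these gives $g(a,b)>0$ pointwise. This is the standard Karamata/majorization argument, and the case $x=0$ causes no trouble because $x\ge\delta^2>0$.

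Finally, $g$ is continuous on the compact square $[\delta,1-\delta]^2$, where all its arguments stay in $[\delta^2,1]$. So it attains a minimum $C(\delta,\gamma)>0$, which depends only on $\delta$ and $\gamma$, as required.

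The main point to get right is strictness, together with its uniformity over the square. Uniformity is handled by compactness, which is why the hypothesis restricts both parameters to the closed interval $[\delta,1-\delta]$, bounded away from $0$ and $1$; this is what keeps the gap $\min(a,b)-ab$ bounded below. A smaller point to check is that monotonicity of $\phi$ legitimately lets us replace $\bhat(\check W)$ by its upper bound $a+b-ab$, which requires that bound to lie in $[0,1]$; it does.
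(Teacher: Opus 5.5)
Your proposal is correct and follows the paper's proof. Both bound $\phi(\bhat(\hat W))+\phi(\bhat(\check W))$ by $\phi(ab)+\phi(a+b-ab)$ using Lemma~\ref{lemma:PairwiseConstract} and the monotonicity of $\phi$, then obtain a positive minimum of $\phi(a)+\phi(b)-\phi(ab)-\phi(a+b-ab)$ over $[\delta,1-\delta]^2$ from the strict-concavity (Karamata) inequality of Lemma~\ref{lemma:Subadditivity}. Your explicit two-point convex-combination argument and the compactness step only make explicit what the paper leaves implicit.
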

\begin{proof}
    Using (\ref{eq:Contraction}) and the monotonicity of $\phi(\cdot)$, we have
    \begin{align}
        \phi(\bhat(\hat W)) + \phi(\bhat(\check W)) &\le \phi(\bhat(W_1) + \bhat(W_2) - \bhat(W_1)\bhat(W_2)) + \phi(\bhat(W_1)\bhat(W_2)). \label{eq:ApplyContract}
    \end{align}
    Define
    \begin{align}
        C(\delta, \gamma) &= \min\limits_{\delta \le a \le b \le 1 - \delta} f(a,b), \text{ where}\\
        f(a,b) &= \phi(a) + \phi(b) - \phi(a + b - ab) - \phi(ab).
    \end{align}
    Applying Lemma~\ref{lemma:Subadditivity} shows that $C(\delta, \gamma) > 0$. Combining this with (\ref{eq:ApplyContract}) completes the proof.
\end{proof}

\begin{lemma} \label{lemma:Subadditivity}
    For all $a, b \in (0,1)$, the following holds:
    \begin{align}
        \phi(a + b - ab) + \phi(ab) < \phi(a) + \phi(b).
    \end{align}
\end{lemma}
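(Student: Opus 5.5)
Write $\phi(x)=x^\gamma$ with $\gamma\in(0,1)$, and set $s=a+b-ab$, $p=ab$. The key observation is a majorization fact. The pair $(s,p)$ has the same sum as $(a,b)$, since $s+p=a+b$. The pair $(s,p)$ is also strictly more spread out. Assume without loss of generality $a\le b$. Then
\[
p = ab < a \le b < a+b-ab = s .
\]
The first inequality holds because $b<1$ and $a>0$. The last one is equivalent to $a(1-b)>0$.

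So $(s,p)$ strictly majorizes $(a,b)$: the two pairs have equal sums and $[p,s]\supsetneq[a,b]$. Since $\phi$ is strictly concave on $[0,1]$, Karamata's inequality gives $\phi(s)+\phi(p)<\phi(a)+\phi(b)$, which is the claim.

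To keep the argument self-contained, I would prove the two-point case directly rather than cite Karamata. Write $a$ and $b$ as convex combinations of the endpoints $p$ and $s$:
\[
a=\lambda p+(1-\lambda)s,\qquad b=(1-\lambda)p+\lambda s,\qquad \lambda=\frac{s-a}{s-p}\in[0,1].
\]
The formula for $b$ uses $a+b=p+s$. Strict concavity then gives
\[
\phi(a)\ge\lambda\phi(p)+(1-\lambda)\phi(s),\qquad \phi(b)\ge(1-\lambda)\phi(p)+\lambda\phi(s).
\]
Adding these yields $\phi(a)+\phi(b)\ge\phi(p)+\phi(s)$.

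For strictness, note that $p<a<s$ forces $\lambda\in(0,1)$, and $p<s$ means the endpoints are distinct. Strict concavity therefore makes each of the two inequalities strict.

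The only delicate point is verifying the strict inequalities $p<a$ and $a<s$ for $a,b\in(0,1)$, which we did above. There is also the endpoint $p$, which may be near $0$, where $\phi$ is not differentiable. This causes no trouble: the argument uses only concavity of $\phi$ on the closed interval $[0,1]$, never derivatives, and $p>0$ anyway.

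Finally, I would remark that continuity of $f$ on the compact set $\{\delta\le a\le b\le1-\delta\}$, combined with this pointwise strict positivity, yields $C(\delta,\gamma)>0$ as used in Lemma~\ref{lemma:PairwiseLyapunov}.
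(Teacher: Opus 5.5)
Your proof is correct and takes the same route as the paper, which likewise notes $ab < a \le b < a+b-ab$ with $ab + (a+b-ab) = a+b$ and then invokes Karamata's inequality for the strictly concave $\phi$. Your convex-combination argument just proves the two-point case of Karamata that the paper cites, and your compactness remark correctly supplies the step the paper leaves implicit when it concludes $C(\delta,\gamma)>0$.
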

\begin{proof}
    The proof follows by applying Karamata's inequality~\cite{kadelburg2005inequalities}, noting that $\phi(\cdot)$ is strictly concave and that $ab < a \le b < (a + b - ab)$ (where we assume w.l.o.g. that $a\le b$). 
\end{proof}

We will begin by introducing some notation.
Let $m \in \{0,1,\ldots,n\}$ denote the recursion depth of the \RPT, where $m=0$
denotes the initial sequence $Z^N$, and each increase in $m$ corresponds to one
level of the recursive pairwise polarizing transform.

For a given depth $m \ge 1$, the recursion generates, for each binary string
$B^m = (B_1,\ldots,B_m) \in \{0,1\}^m$, a collection of synthetic subchannels associated with the branch $B^m$.  Here, $B_i = 0$ indicates that, at level $i$ of the recursion, the subchannels
associated with the $\check W$ output of the pairwise polarizing transform are followed, while $B_i = 1$ indicates that the subchannels associated with the $\hat W$ output are followed. 

We will also let $\Pi^m = (\Pi_1,\ldots,\Pi_m)$ denote the permutation randomness at each stage (see Algorithm~\ref{alg:RandomPolarEnc}), with each $\Pi_i$ corresponding to the random permutation applied to the sequence at stage $i-1$.

For fixed $(B^m, \Pi^m, n)$, denote the corresponding subchannels by $$W_1(B^m, \Pi^m, n), \ldots, W_{N_m}(B^m, \Pi^m, n), \text{ where } N_m = 2^{n-m}$$ is the number of subchannels at this level. Then, define
\begin{align}
    \Phi(B^m, \Pi^m, n) = \frac{1}{N_m} \sum\limits_{j=1}^{N_m} \phi(\bhat(W_j(B^m, \Pi^m, n))).
\end{align}

At the next $m+1$ stage of polarization, the current subchannels are randomly paired using $\Pi_{m+1}$, generating $N_m/2$ pairs 
\begin{align}
    &(W^{a}_i(B^m, \Pi^{m+1}), W^{b}_i(B^m, \Pi^{m+1}))_{i=1}^{\frac{N_m}{2}}, \text{ s.t. }\\
    &\bhat(W^{a}_i(B^m, \Pi^{m+1})) \le \bhat(W^{b}_i(B^m, \Pi^{m+1})) \text{ for all } i,
\end{align}
which are then combined using the pairwise polarizing transform
\begin{align}
    (W^{a}_i(B^m, \Pi^{m+1}), W^{b}_i(B^m, \Pi^{m+1})) \mapsto (\check W_i(B_0^{m+1}, \Pi^{m+1}), \hat W_i(B_1^{m+1}, \Pi^{m+1})), 
\end{align}
where $B_b^{m+1} = (B_1,\ldots,B_m,b)$. We then have
\begin{align}
    &E_{\Pi_{m+1}, B_{m+1}}\left[ \Phi(B^{m+1}, \Pi^{m+1}, n)\mid B^m, \Pi^m\right] \\
    &= \frac{1}{N_m} \sum\limits_{i=1}^{N_m/2} E_{\Pi_{m+1}} \left[\phi(\bhat(\check W_i(B_0^{m+1}, \Pi^{m+1}))) + \phi(\bhat(\hat W_i(B_1^{m+1}, \Pi^{m+1}))))\right],
\end{align}
with the expectation being taken over $B_{m+1} \sim \mathrm{Bern}\left(\frac{1}{2}\right)$ and the permutation randomness $\Pi_{m+1}$. From Lemma~\ref{lemma:PairwiseLyapunov}, we then obtain
\begin{align}
    &E_{\Pi_{m+1}, B_{m+1}}\left[ \Phi(B^{m+1}, \Pi^{m+1}, n)\mid B^m, \Pi^m\right] \\
    &\le \Phi(B^m, \Pi^m, n) \nonumber\\
    &- C(\delta, \gamma)E_{\Pi_{m+1}}\left[ \frac{1}{N_m}\sum\limits_{i=1}^{N_m/2}\Ind \left\{\delta \le  \bhat(W^{a}_i(B^m, \Pi^{m+1}, n)) \le \bhat(W^{b}_i(B^m, \Pi^{m+1}, n)) \le 1-\delta \right\} \right]\\
    &= \Phi(B^m, \Pi^m, n) \nonumber\\
    &- \frac{C(\delta, \gamma)}{2}P_{\Pi_{m+1}}\left(\delta \le  \bhat(W^{a}_i(B^m, \Pi^{m+1}, n)) \le \bhat(W^{b}_i(B^m, \Pi^{m+1}, n)) \le 1-\delta \right). \label{eq:PGamma}
\end{align}
Now, consider the fraction of indices $j$ which are non-polarized at level $m$:
\begin{align}
    \theta(B^m, \Pi^m, n) = \frac{\left| \{1 \le j \le N_m: \bhat(W_j(B^m, \Pi^m, n)) \in (\delta, 1-\delta)\}\right|}{N_m}.
\end{align}
Then, we have
\begin{align}
    &P_{\Pi_{m+1}}\left(\delta \le  \bhat(W^{a}_i(B^m, \Pi^m, n)) \le \bhat(W^{b}_i(B^m, \Pi^m, n)) \le 1-\delta \right)\\
    &= \theta(B^m, \Pi^m, n) \cdot \frac{\left(\theta(B^m, \Pi^m, n) \cdot N_m - 1\right)}{N_m - 1}\\
    &\ge \theta(B^m, \Pi^m, n)^2 - \frac{\theta(B^m, \Pi^m, n)}{N_m - 1}.
\end{align}
Substituting this in (\ref{eq:PGamma}), we obtain
\begin{align}
    &E_{\Pi_{m+1}, B_{m+1}}\left[ \Phi(B^{m+1}, \Pi^{m+1}, n)\mid B^m, \Pi^m\right]\\
    &\le \Phi(B^m, \Pi^m, n) - \frac{C(\delta, \gamma)}{2}\left(\theta(B^m, \Pi^m, n)^2 - \frac{\theta(B^m, \Pi^m, n)}{N_m - 1} \right).
\end{align}
Taking expectations over uniformly random bits $B_1,\ldots,B_m$ and permutations $\Pi_1,\ldots,\Pi_m$, and using Jensen's inequality to bound $E[\theta^2] \ge E[\theta]^2$, we then obtain
\begin{align}
    \overline{\Phi}_{m+1,n} \le \overline{\Phi}_{m,n} - \frac{C(\delta, \gamma)}{2}\left(\overline{\theta}_{m,n}^2 - \frac{\overline{\theta}_{m,n}}{N_m - 1} \right), \label{eq:HardGap}
\end{align}
where we have 
\begin{align}
    \overline{\theta}_{m,n} &= E_{B^m, \Pi^m}\left[ \theta(B^m, \Pi^m, n)\right], \text{ and}\\
    \overline{\Phi}_{m,n} &= E_{B^m, \Pi^m}\left[ \Phi(B^m, \Pi^m, n)\right].
\end{align}
Using (\ref{eq:HardGap}) to express the gap between $\overline{\Phi}_{n,n}$ and $\overline{\Phi}_{1,n}$ results in the following inequality:
\begin{align}
    \overline{\Phi}_{n,n} \le \overline{\Phi}_{1,n} - \frac{C(\delta, \gamma)}{2}\sum\limits_{i = 1}^{n-1}\left(\overline{\theta}_{i,n}^2 - \frac{1}{2^i - 1}\right).
\end{align}
Upon rearranging, normalizing and taking the limit, we see that 
\begin{align}
    \lim \sup\limits_{n \to \infty} \frac{1}{n-1}\sum\limits_{i=1}^{n-1} \overline{\theta}_{i,n}^2 = 0.
\end{align}
Let us select
\begin{align}
    n^* = \min\limits_{1 \le j \le n} \overline{\theta}_{j,n}.
\end{align}
Then,
\begin{align}
    \overline{\theta}_{n^*,n}^2 &\le \frac{1}{n}\sum\limits_{i=1}^n \overline{\theta}_{i,n}^2\\
    \implies \lim\sup\limits_{n \to \infty} \overline{\theta}_{n^*,n} &=0.
\end{align}
Bounding the subchannel conditional entropies in terms of the Bhattacharyya parameters (see Proposition~2,~\cite{arikan2010source}) concludes the proof.

\section{Proof of Theorem \ref{theorem:ChannelSimGeneral}}
\label{sec:SimRateTheorem}
The correctness of the scheme follows from the fact that for a fixed permutation randomness, \RPT is a bijection.

To characterize the rate, for any $i \in \{1,\ldots,N\}$, consider
    \begin{align}
        &\Pr(\Delta_i = 1)\\
        &= E_{{\Rep}^N, U^{i-1}, \Pi}\left[ \Pr(\Delta_i = 1\mid  {\Rep}^N, U^{i-1}, \Pi)\right]\\
        &= E_{{\Rep}^N, U^{i-1}, \Pi}\left[ \left| \Pr(U_i = 1\mid {\Rep}^N, U^{i-1}, \Pi) - \Pr(U_i=1\mid U^{i-1}, \Pi)\right|\right]\\
        &\le \sqrt{E_{{\Rep}^N, U^{i-1}, \Pi}\left[ \left(\Pr(U_i = 1\mid {\Rep}^N, U^{i-1}, \Pi) - \Pr(U_i=1\mid U^{i-1}, \Pi)\right)^2\right]} \label{eq:CauchySchwarzDelta}\\
        &= \sqrt{E_{U^{i-1}, \Pi}E_{{\Rep}^N\mid U^{i-1}, \Pi}\left[ \left(\Pr(U_i = 1\mid {\Rep}^N, U^{i-1}, \Pi) - \Pr(U_i=1\mid U^{i-1}, \Pi)\right)^2\right]}\\
        &= \sqrt{E_{U^{i-1}, \Pi}E_{{\Rep}^N\mid U^{i-1}, \Pi}\left[ \left(\Pr(U_i = 1\mid {\Rep}^N, U^{i-1}, \Pi) - E_{{\Rep}^N\mid U^{i-1},\Pi}\left[ \Pr(U_i = 1\mid {\Rep}^N, U^{i-1}, \Pi) \right]\right)^2\right]}\\
        &= \sqrt{E_{U^{i-1}, \Pi}\left[\Var\left[\Pr(U_i = 1\mid {\Rep}^N, U^{i-1}, \Pi)\mid U^{i-1}, \Pi\right]\right]}\\
        &= \sqrt{\Var\left[ \Pr(U_i = 1\mid {\Rep}^N, U^{i-1}, \Pi) \right] - \Var\left[ E\left[ \Pr(U_i = 1\mid {\Rep}^N, U^{i-1}, \Pi) \mid  U^{i-1}, \Pi\right]\right]} \label{eq:LawTotalVarianceDelta}\\
        &= \sqrt{\Var\left[ \Pr(U_i = 1\mid {\Rep}^N, U^{i-1}, \Pi) \right] - \Var\left[ \Pr(U_i = 1\mid U^{i-1}, \Pi)\right]}.
    \end{align}
    Here (\ref{eq:CauchySchwarzDelta}) follows from the Cauchy-Schwarz inequality, and (\ref{eq:LawTotalVarianceDelta}) follows from the law of total variance. Using Lemma \ref{lemma:EntropyVarianceRelation}, we then obtain
    \begin{align}
        \Pr\left(\Delta_i = 1\right) \le \frac{1}{2}\sqrt{H(U_i\mid U^{i-1}, \Pi) - (H(U_i\mid U^{i-1}, {\Rep}^N, \Pi))^2}.
    \end{align}
    We can then bound the rate of the scheme as follows:
    \begin{align}
        \frac{1}{N}E[\ell(b)] &= \frac{1}{N}\sum\limits_{i=1}^N H(\Delta_i) + \frac{2}{N}\\
        &= \frac{1}{N}\sum\limits_{i=1}^N h_B(\Pr(\Delta_i=1)) + \frac{2}{N}\\
        &\le \frac{1}{N}\sum\limits_{i=1}^N h_B\left(\frac{1}{2}\sqrt{H(U_i\mid U^{i-1}, \Pi) - (H(U_i\mid U^{i-1}, {\Rep}^N, \Pi))^2}\right) + \frac{2}{N},
    \end{align}
    where $h_B(\cdot)$ is the binary entropy function.
    
    Fix $\delta \in \left(0, \frac{1}{2}\right)$. Next, we will partition the set of subchannel indices $i \in \{1,2,\ldots,N\}$ into three sets:
    \begin{align}
        A_{\mathrm{low}} &= \{i : H(U_i\mid U^{i-1}, \Pi) \in (0, \delta)\},\\
        A_{\mathrm{mid}} &= \{i : H(U_i\mid U^{i-1}, \Pi) \in (\delta, 1 - \delta)\},\\
        A_{\mathrm{high}} &= \{i : H(U_i\mid U^{i-1}, \Pi) \in (1 - \delta, 1)\}.
    \end{align}
    Using these partitions, we can then write
    \begin{align}
        \frac{1}{N}E[\ell(b)] &\le \frac{1}{N}\sum\limits_{i=1}^N h_B\left(\frac{1}{2}\sqrt{H(U_i\mid U^{i-1}, \Pi) - (H(U_i\mid U^{i-1}, {\Rep}^N, \Pi))^2}\right) + \frac{2}{N}\\
        &\le \frac{1}{N}\sum\limits_{i \in A_{\mathrm{high}}} h_B\left(\frac{1}{2}\sqrt{H(U_i\mid U^{i-1}, \Pi) - (H(U_i\mid U^{i-1}, {\Rep}^N, \Pi))^2}\right) \nonumber\\
        &\phantom{+} + h_B\left(\frac{\sqrt{\delta}}{2}\right) + \frac{|A_{\mathrm{mid}}|}{N} + \frac{2}{N}.
    \end{align}
    
    The bijectivity of the \RPT (upon fixing the permutations) implies that
    \begin{align}
        \sum\limits_{i = 1}^N H(U_i\mid U_1^{i-1}, \Pi) &= \sum\limits_{i=1}^n H(Z_i).
    \end{align}
    Using Theorem~\ref{theorem:GeneralPolarizationTheorem}, we know that 
    \begin{align}
        \lim\limits_{N \to \infty}\frac{|A_{\mathrm{mid}}|}{N} = 0.
    \end{align}
    Hence, for $N$ sufficiently large s.t. $\frac{|A_{\mathrm{mid}}|}{N} < \delta$,
    \begin{align}
        &(1 - \delta)|A_{\mathrm{high}}| \le \sum\limits_{i = 1}^N H(U_i\mid U_1^{i-1}, \Pi) \le |A_{\mathrm{high}}| + (1 - \delta)|A_{\mathrm{mid}}| + \delta |A_{\mathrm{low}}| \label{eq:HighPolarBegin}\\
        &(1 - \delta)|A_{\mathrm{high}}| \le \sum\limits_{i = 1}^N H(Z_i) \le |A_{\mathrm{high}}| + (1 - \delta)|A_{\mathrm{mid}}| + \delta |A_{\mathrm{low}}|\\
        &(1 - \delta)\frac{|A_{\mathrm{high}}|}{N} \le \frac{1}{N}\sum\limits_{i = 1}^N H(Z_i) \le \frac{|A_{\mathrm{high}}|}{N} + 2\delta\\
        & -\delta \le \frac{\sum\limits_{i = 1}^N H(Z_i) - |A_{\mathrm{high}}|}{N} \le 2 \delta. \label{eq:HighPolarEnd}
    \end{align}
    Therefore,
    \begin{align}
        \frac{|A_{\mathrm{high}}|}{N} &\le \frac{1}{N}\sum\limits_{i=1}^NH(Z_i) + 2\delta. \label{eq:SizeBound2}
    \end{align}
    Using these bounds, and noting that $h_B\left(\frac{\sqrt{\delta}}{2}\right) \le 2\delta^{1/4}$, we can express the rate as
    \begin{align}
        \frac{1}{N}E[\ell(b)] &\le \frac{1}{N}\sum\limits_{i \in A_{\mathrm{high}}} h_B\left(\frac{1}{2}\sqrt{H(U_i\mid U^{i-1}, \Pi) - (H(U_i\mid U^{i-1}, {\Rep}^N, \Pi))^2}\right) \nonumber\\
        &\phantom{+} + \frac{1}{N}\sum\limits_{i \in A_{\mathrm{low}}}h_B\left(\frac{\sqrt{\delta}}{2}\right) + \frac{|A_{\mathrm{mid}}|}{N} + \frac{2}{N}\\
        &\le \frac{1}{N}\sum\limits_{i \in A_{\mathrm{high}}} h_B\left(\frac{1}{2}\sqrt{H(U_i\mid U^{i-1}, \Pi) - (H(U_i\mid U^{i-1}, {\Rep}^N, \Pi))^2}\right) \nonumber\\
        &\phantom{+} + 2\delta^{1/4} + \delta + \frac{2}{N}. \label{eq:BeforetildeR}
    \end{align}
    To analyze the contribution of the high-entropy indices $$\tilde{R} = \frac{1}{N}\sum\limits_{i \in A_{\mathrm{high}}} h_B\left(\frac{1}{2}\sqrt{H(U_i\mid U^{i-1}, \Pi) - (H(U_i\mid U^{i-1}, {\Rep}^N, \Pi))^2}\right),$$ we consider another partition of the indices, based on the conditional entropies $H(U_i\mid U^{i-1}, {\Rep}^N, \Pi)$:
    \begin{align}
        B_{\mathrm{low}} &= \{i : H(U_i\mid U^{i-1}, {\Rep}^N, \Pi) \in (0, \delta)\},\\
        B_{\mathrm{mid}} &= \{i : H(U_i\mid U^{i-1}, {\Rep}^N, \Pi) \in (\delta, 1 - \delta)\},\\
        B_{\mathrm{high}} &= \{i : H(U_i\mid U^{i-1}, {\Rep}^N, \Pi) \in (1 - \delta, 1)\}.
    \end{align}
    Then, 
    \begin{align}
        \tilde{R} &= \frac{1}{N}\sum\limits_{i \in A_{\mathrm{high}}} h_B\left(\frac{1}{2}\sqrt{H(U_i\mid U^{i-1}, \Pi) - (H(U_i\mid U^{i-1}, {\Rep}^N, \Pi))^2}\right)\\
        &\le \frac{1}{N}\sum\limits_{i \in A_{\mathrm{high}} \cap B_{\mathrm{low}}} h_B\left(\frac{1}{2}\sqrt{H(U_i\mid U^{i-1}, \Pi) - (H(U_i\mid U^{i-1}, {\Rep}^N, \Pi))^2}\right) \nonumber \\
        &\phantom{=} + \frac{|A_{\mathrm{high}} \cap B_{\mathrm{mid}}|}{N} +  h_B\left(\sqrt{\frac{\delta}{2}}\right). \label{eq:tildeR}
    \end{align}
    Here, (\ref{eq:tildeR}) follows by observing that $h_B(\cdot) \le 1$, and 
    \begin{align}
        &H(U_i\mid U^{i-1}, \Pi) - (H(U_i\mid U^{i-1}, {\Rep}^N, \Pi))^2\\
        &\le 1 - (1 - \delta)^2\\
        &\le 2\delta, \text{ for $i \in A_{\mathrm{high}} \cap B_{\mathrm{high}}$}.
    \end{align}
    Applying Theorem~\ref{theorem:GeneralPolarizationTheorem}, and performing similar computations to (\ref{eq:HighPolarBegin}) - (\ref{eq:HighPolarEnd}), we obtain
    \begin{align}
        \frac{|B_{\mathrm{mid}}|}{N} &\le 2\delta, \text{ and}\\
        \frac{|B_{\mathrm{high}}|}{N} &\in \left(\frac{1}{N}\sum\limits_{i=1}^N H(Z_i\mid {\Rep}_i) - \delta,\frac{1}{N}\sum\limits_{i=1}^N H(Z_i\mid {\Rep}_i) + 2\delta\right). \label{eq:SizeBound3}
    \end{align}
    From these, and noting that $B_{\mathrm{high}} \subset A_{\mathrm{high}}$, we can obtain an upper bound on $|A_{\mathrm{high}} \cap B_{\mathrm{low}}|$:
    \begin{align}
        \frac{1}{N}|A_{\mathrm{high}} \cap B_{\mathrm{low}}| &\le \frac{1}{N}|A_{\mathrm{high}}| - \frac{1}{N}|A_{\mathrm{high}} \cap B_{\mathrm{high}}|\\
        &= \frac{1}{N}|A_{\mathrm{high}}| - \frac{1}{N}|B_{\mathrm{high}}|\\
        &\le \frac{1}{N}\sum\limits_{i=1}^N I({\Rep}_i;Z_i) + 3\delta.
    \end{align}
    Substituting this in (\ref{eq:tildeR}), and noting the bound $h_B\left(\sqrt{\frac{\delta}{2}}\right) \le 2\delta^{1/4}$, we obtain
    \begin{align}
        \tilde{R} \le \frac{1}{N}\sum\limits_{i=1}^N I({\Rep}_i;Z_i) + 4\delta + 2\delta^{1/4}.
    \end{align}
    Substituting this into (\ref{eq:BeforetildeR}), and considering $n > \frac{2}{\delta}$, we obtain
    \begin{align}
        \frac{1}{N}E\left[ \ell(b)\right] \le \frac{1}{N}\sum\limits_{i=1}^N I({\Rep}_i;Z_i) + 6\delta + 4\delta^{1/4}.
    \end{align}
    The result then follows from this as $\delta$ was chosen arbitrarily.
\begin{lemma} \label{lemma:EntropyVarianceRelation}
    Given random variables $T, V \sim P_{T,V}$ on $\{0,1\} \times \mathcal{V}$, define the quantities $P = \Pr(T=1\mid V)$, $p = \Pr(T=1)$, and $H = H(T\mid V)$. We then have
    \begin{align}
        \Var[P] - p(1-p)\in \left(-\frac{H}{4}, -\frac{H^2}{4} \right).
    \end{align}
\end{lemma}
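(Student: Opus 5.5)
The plan is to rewrite the left-hand side as an expectation of a pointwise function of $P$, and then reduce both bounds to scalar inequalities between the binary entropy $h_B$ and $x(1-x)$. Since $E[P]=p$, we have $\Var[P]=E[P^2]-p^2$, so
\begin{align}
    p(1-p)-\Var[P] = E[P]-E[P^2] = E\left[P(1-P)\right].
\end{align}
Also $H=H(T\mid V)=E[h_B(P)]$. The claim is therefore equivalent to
\begin{align}
    \frac{H^2}{4} \le E\left[P(1-P)\right] \le \frac{H}{4}.
\end{align}
This holds with equality in degenerate cases, for example $P\equiv\tfrac12$ or $P\in\{0,1\}$ almost surely. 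So I would establish the closed version. This is all that is used in the proof of Theorem~\ref{theorem:ChannelSimGeneral}, where only the upper bound on $\Pr(\Delta_i=1)$ enters. The strict inclusion holds under nondegeneracy, and I would remark on this rather than prove it.

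For the upper bound $E[P(1-P)]\le H/4$, it suffices to show pointwise that $4x(1-x)\le h_B(x)$ for $x\in[0,1]$, and then take expectations. To prove this, substitute $x=\frac{1+t}{2}$ with $t\in[-1,1]$. Then $4x(1-x)=1-t^2$, and the standard expansion gives
\begin{align}
    h_B(x)=1-\frac{1}{2\ln 2}\sum_{k\ge 1}\frac{t^{2k}}{k(2k-1)}.
\end{align}
The coefficients $c_k=\frac{1}{2\ln 2\,k(2k-1)}$ are nonnegative and sum to $1$, since $h_B(1)=0$. Because $t^{2k}\le t^2$ for $|t|\le 1$, we get $\sum_k c_k t^{2k}\le t^2$, which is exactly $h_B(x)\ge 1-t^2$.

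For the lower bound, I would prove pointwise that $h_B(x)^2\le 4x(1-x)$, i.e.\ $h_B(x)\le 2\sqrt{x(1-x)}$. This is the classical bound of Lin (1991) relating entropy to the Bhattacharyya coefficient. Given it, Jensen's inequality finishes the argument:
\begin{align}
    H^2=\left(E[h_B(P)]\right)^2\le E\left[h_B(P)^2\right]\le 4E\left[P(1-P)\right].
\end{align}

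The main obstacle is this pointwise inequality $h_B(x)\le 2\sqrt{x(1-x)}$. Unlike the first bound, a termwise comparison of power series fails: $\sqrt{1-t^2}$ has $t^2$-coefficient $\tfrac12$, while $h_B$ has $\tfrac{1}{2\ln 2}\approx 0.72$, and the inequality only holds for the full sums. My first attempt would be calculus. Set $g(x)=h_B(x)/\sqrt{x(1-x)}$, which is symmetric about $\tfrac12$, so it suffices to consider $x\in(0,\tfrac12]$. I would show $g$ is nondecreasing on $(0,\tfrac12]$ by checking the sign of $g'$. This reduces to showing
\begin{align}
    2x(1-x)\log_2\frac{1-x}{x} \ge \left(\tfrac12-x\right)h_B(x) \quad \text{on } (0,\tfrac12],
\end{align}
which I would verify by comparing derivatives, both sides vanishing at $x=\tfrac12$. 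This gives $g\le g(\tfrac12)=2$. If that derivative comparison becomes messy, I would fall back on citing Lin's inequality directly.
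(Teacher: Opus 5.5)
Your proposal is correct and follows the paper's proof: the paper also writes $p(1-p)-\Var[P]=E[P(1-P)]$ and sandwiches $H=E[h_B(P)]$ between $4E[P(1-P)]$ and $2\sqrt{E[P(1-P)]}$, using the pointwise bounds $4x(1-x)\le h_B(x)\le 2\sqrt{x(1-x)}$ (cited from Tops\o{}e) together with Jensen. Your remark that only the closed interval holds in degenerate cases, such as $P\in\{0,1\}$ a.s.\ or $P\equiv\tfrac12$, is a valid correction to the statement as written.

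One caution concerns your optional calculus check. The condition for $g'\ge 0$ is $x(1-x)\log_2\frac{1-x}{x}\ge(\tfrac12-x)h_B(x)$; your version has a stray factor $2$, and proving it would not give $g'\ge 0$. Let $F$ be the left-hand side minus the right-hand side of the correct condition. Then $F'\to+\infty$ as $x\to 0^+$, so $F$ is not monotone on $(0,\tfrac12]$ and anchoring at $x=\tfrac12$ alone fails. Use instead $F''(x)=-\frac{1/2-x}{\ln 2\, x(1-x)}<0$ together with $F(0^+)=F(\tfrac12)=0$, or simply cite $h_B(x)\le(4x(1-x))^{1/\ln 4}\le 2\sqrt{x(1-x)}$, which is what the paper effectively does.
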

\begin{proof}
    First, we know that
    \begin{align}
        \Var[P] &= E[P^2] - p^2\\
        &= p(1-p) - E[P(1-P)]. \label{eq:VarianceExpression}
    \end{align}
    From standard bounds on the binary entropy function (see Theorem 1.2, \cite{topsoe2001bounds}), we also have
    \begin{align}
        H &= E\left[h_B(P)\right]\\
        & \in \left(E\left[ 4P(1-P)\right],  2\sqrt{E\left[P(1-P)\right]}\right). \label{eq:BEntBounds}
    \end{align}
    Combining (\ref{eq:VarianceExpression}) and (\ref{eq:BEntBounds}), we obtain the required result.
\end{proof}

\section{More details on the training and operational schemes}
\label{sec:more-training-details}

Our experiments were designed to ensure a fair comparison between SoftBinary Coding (SBC) and the Nonlinear Transform Coding (NTC) baseline while maintaining high numerical stability. We equalized the sampling budget for both schemes by using a total of 256 samples for each optimization step. For SBC, we employed a batch size of 16, with 16 samples drawn from the encoder $q_\theta$ per source realization for the VarGrad objective, whereas NTC utilized a standard batch size of 256. We found that the optimization of SBC is robust with respect to allocating the total number of samples between samples from the encoder and from the source. To cover a sufficiently broad rate--distortion region of interest for both SBC and NTC, we varied the Lagrange multiplier $\lambda$ in~\eqref{eq:RD} logarithmically.

Both models were optimized using the Adam optimizer for $10 000$ epochs (consisting of $1000$ steps each) with a learning rate of $10^{-4}$, which was dropped to $10^{-5}$ during the final $10\%$ of training. Following standard practice in the neural data compression literature, we report discrete entropy estimates under the assumption that an ideal arithmetic coder would asymptotically reach this limit.

As discussed in Sec.~\ref{sec:operational_scheme}, for the channel simulation part, we further refine coding efficiency by implementing a latent pruning strategy during the transition from the training to the operational scheme. While the end-to-end objective encourages the minimization of the KLD, some latent dimensions typically contribute near zero information. We identify these dimensions by Monte Carlo estimating the KLD induced by each latent bit using source samples; any dimension contributing less than $\tfrac{1}{1000}$th of the total KLD is thresholded and removed. This ensures that the channel simulation resources are prioritized for the most informative latent bits.

The operational performance of SBC is governed by the channel simulation block length $N$ (see Sec.~\ref{sec:channelSimBackg} and~\ref{sec:channel-simulation-algo}). In our experiments, we set $N = 2^{23}$ ($\approx 8.4 \times 10^6$) to ensure sufficient polarization. For the arithmetic coding part in the $\mathtt{Compress}$ and $\mathtt{Decompress}$ routines, we utilize 50 Monte Carlo steps to estimate probabilities, and our final rate--distortion metrics are averaged over 10 independent simulation runs. As predicted by the theoretical result in Appendix~\ref{sec:SimRateTheorem}, we observed that the gap between training and operational performance decreases as $N$ increases, allowing the operational scheme to more closely match the performance during training. Despite the large block length, the $O(N \log N)$ complexity of the channel simulation scheme yields an operational scheme that remains computationally efficient.

To support reproducibility and further research, we will release the source code upon publication.

\end{document}